\def\final{0}
\def\trim{0}
\definecolor{DarkGreen}{rgb}{0.1,0.5,0.1}
\definecolor{DarkRed}{rgb}{0.5,0.1,0.1}
\definecolor{DarkBlue}{rgb}{0.1,0.1,0.5}
\newcommand{\mynote}[1]{\marginpar{\tiny \sf #1}}
\newcommand{\mynote}[1]{}
\newcommand{\trimrm}[1]{#1}
\newcommand{\trimtext}[1]{}
\newcommand{\trimrm}[1]{}
\newcommand{\trimtext}[1]{#1}
\newcommand\N{\mathbb{N}}
\newcommand{\cA}{\mathcal{A}}
\newcommand{\cB}{\mathcal{B}}
\newcommand{\cC}{\mathcal{C}}
\newcommand{\cD}{\mathcal{D}}
\newcommand{\cH}{\mathcal{H}}
\newcommand{\cK}{\mathcal{K}}
\newcommand{\cQ}{\mathcal{Q}}
\newcommand{\cS}{\mathcal{S}}
\newcommand{\cX}{\mathcal{X}}
\newcommand{\eps}{\varepsilon}
\newcommand{\poly}{\mathrm{poly}}
\newcommand{\getsr}{\gets_{\mbox{\tiny R}}}
\newcommand{\negl}{\mathrm{negl}}
\newtheorem{theorem}{Theorem}[section]
\newtheorem{lemma}[theorem]{Lemma}
\newtheorem{corollary}[theorem]{Corollary}
\newtheorem{proposition}[theorem]{Proposition}
\theoremstyle{definition}
\newtheorem{definition}[theorem]{Definition}
\newcommand{\error}{{\rm error}}
\newcommand{\point}{\mathsf{Point}}
\newcommand{\thresh}{\mathsf{Thresh}}
\def\poly{\mathop{\rm{poly}}\nolimits}
\newcommand{\pos}{\operatorname{POS}}
\newcommand{\encthresh}{\mathsf{EncThresh}}
\newcommand{\stat}{\mathsf{STAT}}
\newcommand{\sig}{\mathsf{ValidSig}}
\newcommand{\hashpoint}{\mathsf{HashPoint}}
\newcommand{\Gen}{\mathsf{Gen}}
\newcommand{\Sign}{\mathsf{Sign}}
\newcommand{\Ver}{\mathsf{Ver}}
\newcommand{\Setup}{\mathsf{Setup}}
\newcommand{\Enc}{\mathsf{Enc}}
\newcommand{\Dec}{\mathsf{Dec}}
\newcommand{\PRF}{\mathsf{PRF}}
\newcommand{\PRG}{\mathsf{PRG}}
\newcommand{\Comp}{\mathsf{Comp}}
\newcommand{\Com}{\mathsf{Com}}
\newcommand{\IO}{\mathsf{IO}}
\newcommand{\GenTT}{\mathsf{Gen_{ex}}}
\newcommand{\TraceTT}{\mathsf{Trace_{ex}}}
\newcommand{\Prove}{\mathsf{Prove}}
\newcommand{\sk}{\mathsf{sk}}
\newcommand{\params}{\mathsf{params}}
\newcommand{\crs}{\mathsf{crs}}
\newcommand{\vk}{\mathsf{vk}}
\newcommand{\accept}{\mathsf{accept}}
\newcommand{\reject}{\mathsf{reject}}
\title{Order-Revealing Encryption and the \\ Hardness of Private Learning}
\author{	Mark Bun\thanks{School of Engineering \& Applied Sciences, Harvard University. \texttt{mbun@seas.harvard.edu}. Supported by an NDSEG fellowship and NSF grant CNS-1237235.} \qquad
		Mark Zhandry\thanks{Stanford University.  \texttt{mzhandry@gmail.com}.  Supported by the DARPA PROCEED program.}
} 
\date{May 2, 2015}
\begin{document}

\maketitle

\begin{abstract}
An order-revealing encryption scheme gives a public procedure by which two ciphertexts can be compared to reveal the ordering of their underlying plaintexts. We show how to use order-revealing encryption to separate computationally efficient PAC learning from efficient $(\eps, \delta)$-differentially private PAC learning. That is, we construct a concept class that is efficiently PAC learnable, but for which every efficient learner fails to be differentially private. This answers a question of Kasiviswanathan et al. (FOCS '08, SIAM J. Comput. '11).

To prove our result, we give a generic transformation from an order-revealing encryption scheme into one with strongly correct comparison, which enables the consistent comparison of ciphertexts that are not obtained as the valid encryption of any message. We believe this construction may be of independent interest.
\end{abstract}
\thispagestyle{empty}

\vfill

\noindent \textbf{Keywords}: differential privacy, learning theory, order-revealing encryption

\newpage

\setcounter{page}{1}

\section{Introduction}

\label{sec:intro}

Many agencies hold sensitive information about individuals, where statistical analysis of this data could yield great societal benefit. The line of work on differential privacy \cite{DworkMcNiSm06} aims to enable such analysis while giving a strong formal guarantee on the privacy afforded to individuals. Noting that the framework of computational learning theory captures many of these statistical tasks, Kasiviswanathan et al. \cite{KLNRS11} initiated the study of \emph{differentially private learning}. Roughly speaking, a differentially private learner is required to output a classification of labeled examples that is accurate, but does not change significantly based on the presence or absence of any individual example.

The early positive results in private learning established that, ignoring computational complexity, any concept class is privately learnable with a number of samples logarithmic in the size of the concept class \cite{KLNRS11}. Since then, a number of works have improved our understanding of the sample complexity -- the minimum number of examples -- required by such learners to simultaneously achieve accuracy and privacy. Some of these works showed that privacy incurs an inherent additional cost in sample complexity; that is, some concept classes require more samples to learn privately than they require to learn without privacy \cite{BeimelKaNi10, CH11, BeimelNiSt13b, FX14, ChaudhuriHsSo14, BunNiStVa15}. In this work, we address the complementary question of whether there is also a \emph{computational} price of differential privacy for learning tasks, for which much less is known. The initial work of Kasiviswanathan et al. \cite{KLNRS11} identified the important question of whether any efficiently PAC learnable concept class is also efficiently privately learnable, but only limited progress has been made on this question since then \cite{BeimelKaNi10, Nissim14}. 

Our main result gives a strong negative answer to this question. We exhibit a concept class that is efficiently PAC learnable, but under plausible cryptographic assumptions cannot be learned efficiently and privately. To prove this result, we establish a connection between private learning and \emph{order-revealing encryption}. We construct a new order-revealing encryption scheme with strong correctness properties that may be of independent learning-theoretic and cryptographic interest.

\subsection{Differential Privacy and Private Learning}

We first recall Valiant's (distribution-free) PAC model for learning \cite{Valiant84}. Let $\cC$ be a \emph{concept class} consisting of concepts $c : X \to \{0, 1\}$ for a data universe $X$. A learner $L$ is given $n$ samples of the form $(x_i, c(x_i))$ where the $x_i$'s are drawn i.i.d. from an unknown distribution, and are labeled according to an unknown concept $c$. The goal of the learner is to output a \emph{hypothesis} $h : X \to \{0, 1\}$ from a hypothesis class $\cH$ that approximates $c$ well on the unknown distribution. That is, the probability that $h$ disagrees with $c$ on a fresh example from the unknown distribution should be small -- say, less than $0.05$. The hypothesis class $\cH$ may be different from $\cC$, but in the case where $\cH \subseteq \cC$ we call $L$ a \emph{proper} learner. Moreover, we say a learner is \emph{efficient} if it runs in time polynomial in the description size of $c$ and the size of its examples.

Kasiviswanathan et al. \cite{KLNRS11} defined a private learner to be a PAC learner that is also differentially private. 
Two samples $S = \{(x_1, b_1), \dots, (x_n, b_n)\}$ and $S' = \{(x'_1, b'_1), \dots, (x'_n, b'_n)\}$ are said to be \emph{neighboring} if they differ on exactly one example, which we think of as corresponding to one individual's information. A randomized learner $L : (X \times \{0, 1\})^n \to \cH$ is  $(\eps, \delta)$-\emph{differentially private} if for all neighboring datasets $S$ and $S'$ and all sets $T \subseteq \cH$, 
\[\Pr[L(S) \in T] \le e^{\eps}\Pr[L(S') \in T] + \delta.\]
The original definition of differential privacy \cite{DworkMcNiSm06} took $\delta = 0$, a case which is called \emph{pure} differential privacy. The definition with positive $\delta$, called \emph{approximate} differential privacy, first appeared in \cite{DKMMN06} and has since been shown to enable substantial accuracy gains. Throughout this introduction, we will think of $\eps$ as a small constant, e.g. $\eps = 0.1$, and $\delta = o(1/n)$.

Kasiviswanathan et al. \cite{KLNRS11} gave a generic ``Private Occam's Razor'' algorithm, showing that any concept class $\cC$ can be privately (properly) learned using $O(\log |\cC|)$ samples. Unfortunately, this algorithm runs in time $\Omega(|\cC|)$, which is exponential in the description size of each concept. With an eye toward designing efficient private learners, Blum et al. \cite{BlumDwMcNi05} made the powerful observation that any efficient learning algorithm in the \emph{statistical queries} (SQ) framework of Kearns \cite{Kearns98} can be efficiently simulated with differential privacy. Moreover, Kasiviswanathan et al. \cite{KLNRS11} showed that the efficient learner for the concept class of parity functions based on Gaussian elimination can also be implemented efficiently with differential privacy. These two techniques -- SQ learning and Gaussian elimination -- are essentially the only methods known for computationally efficient PAC learning. The fact that these can both be implemented privately led Kasiviswanathan et al. \cite{KLNRS11} to ask whether \emph{all} efficiently learnable concept classes could also be efficiently learned with differential privacy.

Beimel et al. \cite{BeimelKaNi10} made partial progress toward this question in the special case of pure differential privacy with proper learning, showing that the sample complexity of efficient learners can be much higher than that of inefficient ones. \trimrm{Specifically, they showed that assuming the existence of pseudorandom generators with exponential stretch, there exists for any $\ell(d) = \omega(\log d)$ a concept class over $\{0, 1\}^d$ for which every efficient proper private learner requires $\Omega(d)$ samples, but an inefficient proper private learner only requires $O(\ell(d))$ examples.} Nissim \cite{Nissim14} strengthened this result substantially for ``representation learning,'' where a proper learner is further restricted to output a canonical representation of its hypothesis. He showed that, assuming the existence of one-way functions, there exists a concept class that is efficiently representation learnable, but not efficiently privately representation learnable (even with approximate differential privacy). With Nissim's kind permission, we give the details of this construction in \trimrm{Section \ref{sec:proper}}\trimtext{the full version of this work}.

Despite these negative results for proper learning, one might still have hoped that any efficiently learnable concept class could be efficiently \emph{improperly} learned with privacy. Indeed, a number of works have shown that, especially with differential privacy, improper learning can be much more powerful than proper learning. For instance, Beimel et al. \cite{BeimelKaNi10} showed that under pure differential privacy, the simple class of $\point$ functions (indicators of a single domain element) requires $\Omega(d)$ samples to privately learn properly, but only $O(\log d)$ samples to privately learn improperly. Moreover, computational separations are known between proper and improper learning even without privacy considerations. Pitt and Valiant \cite{PittVa88} showed that unless $\mathbf{NP} = \mathbf{RP}$, $k$-term DNF are not efficiently properly learnable, but they are efficiently improperly learnable \cite{Valiant84}.

Under plausible cryptographic assumptions, we resolve the question of Kasiviswanathan et al. \cite{KLNRS11} in the negative, even for improper learners. The assumption we need is the existence of ``strongly correct'' order-revealing encryption (ORE) schemes, described in Section \ref{sec:intro-ore}.

\begin{theorem}[Informal]
Assuming the existence of strongly correct ORE, there exists an efficiently computable concept class $\encthresh$ that is efficiently PAC learnable, but not efficiently learnable by any $(\eps, \delta)$-differentially private algorithm.
\end{theorem}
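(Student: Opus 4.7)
Let $(\Gen, \Enc, \Dec, \Comp)$ be a strongly correct ORE scheme with plaintext space $[N]$. Define $\encthresh$ over the ciphertext space: for each key $\sk$ and threshold $y \in [N]$, the concept $c_{\sk, y}(x) = 1$ iff $x \leq \Enc_\sk(y)$ in the order induced by $\Comp$. Because $\Comp$ is total on ciphertexts, $\encthresh$ is exactly the class of thresholds under this order and has VC dimension $1$. A simple efficient learner sorts its labeled examples by $\Comp$, identifies the largest labeled-$1$ ciphertext $c^+$, and outputs the threshold hypothesis $h(x) = \1[x \leq c^+]$. Uniform convergence then gives $(\eps, \delta)$-PAC accuracy with $O\!\left(\tfrac{1}{\eps}\log\tfrac{1}{\delta}\right)$ samples in polynomial time.

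\textbf{Hardness of private learning.} Suppose toward contradiction that $L$ is an efficient $(\eps, \delta)$-DP PAC learner for $\encthresh$. The plan is to convert $L$ into an efficient adversary against ORE semantic security. Accuracy forces $L$'s hypothesis $h$ to localize the target threshold in the ciphertext order --- so that $h$ correctly classifies fresh random encryptions --- while DP forbids $h$ from being closely tied to any individual training ciphertext. An efficient learner with access only to $\Comp$ sees just the order information about its inputs, so to simultaneously satisfy accuracy and DP, $h$ would have to depend on genuine semantic information about the training ciphertexts beyond their pairwise order, contradicting ORE security. The strong correctness of $\Comp$ is essential here: it ensures that $h$, which may be an invalid ciphertext output by $L$, still compares consistently against fresh, honestly generated challenge ciphertexts in the reduction.

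\textbf{Main obstacle.} The delicate step is making the accuracy-versus-privacy tension precise enough to yield a concrete semantic-security distinguisher. I expect a fingerprinting-style argument: exhibit two data distributions that are computationally indistinguishable (by ORE security) but for which any accurate hypothesis must disagree on some challenge input; then use a hybrid over training examples together with DP to show that $L$'s output distribution is nearly identical on the two distributions, producing a contradiction. Carefully propagating the $\delta$ slack of approximate DP and the negligible correctness error of the ORE through the hybrid will require bookkeeping but should be conceptually routine once the core reduction is in place.
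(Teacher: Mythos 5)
Your high-level picture is right and matches the paper's in outline --- encrypt the domain of a threshold class with ORE so the only efficiently-usable information is order, then show that any accurate, efficient hypothesis must ``remember'' a training ciphertext, contradicting differential privacy via an ORE distinguisher. But the hardness direction, which you defer as ``conceptually routine bookkeeping,'' is exactly where the real technical content lies, and your sketch as written would get stuck.

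Two concrete gaps. First, a smaller one in the construction: your learner's hypothesis $h(x) = \1[x \le c^+]$ evaluates $\Comp$, which requires the public parameters $\params$. The paper handles this by making each example a pair $(\params, c)$ and defining each concept to be supported on a single $\params^r$; your version leaves it unclear how $h$ or the learner accesses $\Comp$. You also need a threshold defined via $\Dec(\sk^r, c) < t$ rather than via $x \le \Enc_\sk(y)$, since the latter can be encryption-randomness-dependent and is harder to relate to the plaintext order, but this is a fixable detail.

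Second, and more significantly, the ``fingerprinting-style argument'' you gesture at does not reduce cleanly to ORE security without an idea you have not supplied. The paper's argument (an example reidentification scheme) works by partitioning the plaintexts into buckets $B_i = [m_i, m_{i+1})$ determined by the sorted training messages, estimating the hypothesis's acceptance probability $p_i$ on fresh encryptions from each bucket, and accusing an index $i$ where $p_{i-1} - p_i$ has a $\gamma/n$ gap. The natural reduction --- give $L$ the sample with example $i$ removed, then ORE-challenge on one message from $B_{i-1}$ versus one from $B_i$ --- \emph{fails}: $h$ distinguishes the buckets only with probability $\xi$, and conditioned on the complementary event it may give no signal, or signal in the wrong direction, wiping out the advantage. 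The fix in the paper is a \emph{two-message} challenge: the adversary requests either two messages from the same (randomly chosen) bucket, or one from each of $B_{i-1}$ and $B_i$, and guesses based on whether $h$ agrees on the two challenge ciphertexts. A short calculation shows the advantage is $\tfrac{1}{2}(p-q)^2$, which is always nonnegative and noticeable exactly when there is a gap. This is the missing idea that makes the contradiction go through, and it is not routine bookkeeping. Finally, your justification of where strong correctness is needed is slightly off: it is required for the \emph{PAC learnability} direction (so the learner works under adversarial distributions that put weight on malformed ciphertexts), not for the soundness reduction, where all ciphertexts are honestly generated by the challenger.
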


We stress that this result holds even for improper learners and for the relaxed notion of approximate differential privacy. We remark that cryptography has played a major role in shaping our understanding of the computational complexity of learning in a number of models (e.g. \cite{Valiant84, KearnsVa94, Kharitonov95, Servedio00}). It has also been used before to show separations between what is efficiently learnable in different models (e.g. \cite{Blum94, ServedioGo04}).

\subsection{Our Techniques}

We give an informal overview of the construction and analysis of the concept class $\encthresh$. 

We first describe the concept class of thresholds $\thresh$ and its simple PAC learning algorithm. Consider the domain $[N] = \{1, \dots, N\}$. Given a number $t \in [N]$, a threshold concept $c_t$ is defined by $c_t(x) = 1$ if and only if $x \le t$. The concept class of thresholds admits a simple and efficient proper PAC learning algorithm $L_{\thresh}$. Given a sample $\{(x_1, c_t(x_1)), \dots, (x_n, c_t(x_n))\}$ labeled by an unknown concept $c_t$, the learner $L_{\thresh}$ identifies the largest positive example $x_{i^*}$ and outputs the hypothesis $h = c_{x_{i^*}}$. That is, $L_{\thresh}$ chooses the threshold concept that minimizes the empirical error on its sample. To achieve a small constant error on \emph{any} underlying distribution on examples, it suffices to take $n = O(1)$ samples.

A simple but important observation about $L_{\thresh}$ is that it is completely oblivious to the actual numeric values of its examples, or even to the fact that the domain is $[N]$. In fact, $L_{\thresh}$ works equally well on any totally-ordered domain on which it can efficiently compare examples. In an extreme case, the learner $L_{\thresh}$ still works when its examples are encrypted under an \emph{order-revealing encryption} (ORE) scheme, which guarantees that $L_{\thresh}$ is able to learn the order of its examples, but nothing else about them. Up to small technical modifications, our concept class $\encthresh$ is exactly the class $\thresh$ where examples are encrypted under an ORE scheme.

For $\encthresh$ to be efficiently PAC learnable, it must be learnable even under distributions that place arbitrary weight on examples corresponding to invalid ciphertexts. To this end, we require a ``strong correctness'' condition on our ORE scheme. The strong correctness condition ensures that all ciphertexts, even those that are not obtained as encryptions of messages, can be compared in a consistent fashion. This condition is not met by current constructions of ORE, and one of the technical contributions of this work is a generic transformation from weakly correct ORE schemes to strongly correct ones.

While a learner similar to $L_{\thresh}$ is able to efficiently PAC learn the concept class $\encthresh$, we argue that it cannot do so while preserving differential privacy with respect to its examples. Intuitively, the security of the ORE scheme ensures that essentially the only thing a learner for $\encthresh$ can do is output a hypothesis that compares an example to one it already has. We make this intuition precise by giving an algorithm that traces the hypothesis output by any efficient learner back to one of the examples used to produce it. This formalization builds conceptually on the connection between differential privacy and traitor-tracing schemes (see Section \ref{sec:related}), but requires new ideas to adapt to the PAC learning model.

\subsection{Order-Revealing Encryption} \label{sec:intro-ore}

Motivated by the task of answering range queries on encrypted databases, an \emph{order-revealing} encryption (ORE) scheme~\cite{BCO11,BLRSZZ15} is a special type of symmetric key encryption scheme where it is possible to publicly sort \emph{ciphertexts} according to the order of the \emph{plaintexts}.  More precisely, the plaintext space of the scheme is the set of integers $[N] = \{1,...,N\}$,\footnote{More generally, any totally-ordered plaintext space can be considered} and in addition to the \emph{private} encryption and decryption procedures $\Enc,\Dec$, there is a public comparison procedure $\Comp$ that takes as input two ciphertexts, and reveals the order of the corresponding plaintexts.  The notion of \emph{best-possible semantic security}, defined in Boneh et al.~\cite{BLRSZZ15}, intuitively captures the requirement that, given a collection of ciphertexts, no information about the plaintexts is learned, \emph{except} for the ordering.

\paragraph{Known constructions of order-revealing encryption.} Order-revealing encryption can be seen as a special case of 2-input \emph{functional encryption}.  In such a scheme, there are several functions $f_1,...,f_k$, and given two ciphertexts $c_0,c_1$ encrypting $m_0,m_1$, it is possible to learn $f_i(m_0,m_1)$ for all $i \in [k]$.  General \emph{multi-input} functional encryption schemes can be obtained from indistinguishability obfuscation~\cite{GGGJKLSSZ14} or multilinear maps~\cite{BLRSZZ15}.  It is also possible to build ORE from \emph{single-input} functional encryption with function privacy, which means that $f$ is kept secret.  Such schemes can be build from regular single-input schemes without function privacy by work of Brakerski and Segev~\cite{BS15}, and such single-input schemes can also be built from obfuscation~\cite{GGHRSW13} or multilinear maps~\cite{GGHZ14b}.

\trimrm{\medskip}

Unfortunately, the above constructions are insufficient for our purposes.  The issue arises from the fact that our learner needs to work for \emph{any} distribution on ciphertexts, even distributions whose support includes malformed ciphertexts.  Unfortunately, previous constructions only achieve a weak form of correctness, which guarantees that encrypting two messages and then comparing the ciphertexts using $\Comp$ produces the same result (with overwhelming probability) as comparing the plaintexts directly.  This requirement only specifies how $\Comp$ works on \emph{valid} ciphertexts, namely actual encryptions of messages.  Moreover, correctness is only guaranteed for these messages with overwhelming probability, meaning even some valid ciphertexts may cause $\Comp$ to misbehave.  

For our learner, this weak form of correctness means, for some distributions that place significant weight on bad ciphertexts, the comparison procedure is completely useless, and thus the learner will fail for these distributions.  

\trimrm{\medskip}

We therefore need a stronger correctness guarantee.  We need that, for any two \emph{ciphertexts}, the comparison procedure is consistent with decrypting the two ciphertexts and comparing the resulting plaintexts.  This correctness guarantee is meaningful even for improperly generated ciphertexts.

We note that none of the existing constructions of order-revealing encryption outlined above satisfy this stronger notion.  For the obfuscation-based schemes, ciphertexts consist of obfuscated programs.  In these schemes, it is easy to describe invalid ciphertexts where the obfuscated program performs incorrectly, causing the comparison procedure to output the wrong result.  In the multilinear map-based schemes, the underlying instantiation use current ``noisy''  multilinear maps, such as~\cite{GGH13}.  An invalid ciphertext could, for example, have too much noise, which will cause the comparison procedure to behave unpredictably.

\trimtext{\vspace{-2mm}}

\paragraph{Obtaining strong correctness.}  

We first argue that, for all existing ORE schemes, the scheme can be modified so that $\Comp$ is correct for all \emph{valid} ciphertexts. We then give a generic conversion from any ORE scheme with weakly correct comparison, including the tweaked existing schemes, into a strongly correct scheme.  We simply modify the ciphertext by adding a non-interactive zero-knowledge (NIZK) proof that the ciphertext is well-formed, with the common reference string added to the public comparison key.  Then the decryption and comparison procedures check the proof(s), and only output the result (either decryption or comparison) if the proof(s) are valid.  The (computational) zero-knowledge property of the NIZK implies that the addition of the proof to the ciphertext does not affect security.  Meanwhile, NIZK soundness implies that any ciphertext accepted by the decryption and comparison procedures must be valid, and the weak correctness property of the underlying ORE implies that for valid ciphertexts, decryption and comparison are consistent.  The result is that comparisons are consistent with decryption \emph{for all} ciphertexts, giving strong correctness.

As we need strong correctness for every ciphertext, even hard-to-generate ones, we need the NIZK proofs to have perfect soundness, as opposed to computational soundness.  Such NIZK proofs were built in~\cite{GOS12}.

We note also that the conversion outlined above is not specific to ORE, and applies more generally to functional encryption schemes.

\subsection{Related Work} \label{sec:related}

\paragraph{Hardness of Private Query Release.}
One of the most basic and well-studied statistical tasks in differential privacy is the problem of releasing answers to \emph{counting queries}. A counting query asks,``what fraction of the records in a dataset $D$ satisfy the predicate $q$?''. Given a collection of $k$ counting queries $q_1, \dots, q_k$ from a family $\cQ$, the goal of a query release algorithm is to release approximate answers to these queries while preserving differential privacy. A remarkable result of Blum et al. \cite{BlumLiRo08}, with subsequent improvements by \cite{DNRRV09, DworkRoVa10, RothRo10, HardtRo10, GuptaRoUl12, HardtLiMc12}, showed that an arbitrary sequence of counting queries can be answered accurately with differential privacy even when $k$ is exponential in the dataset size $n$. Unfortunately, all of these algorithms that are capable of answering more than $n^2$ queries are inefficient, running in time exponential in the dimensionality of the data. Moreover, several works \cite{DNRRV09, Ullman13, BonehZh14} have gone on to show that this inefficiency is likely inherent.

These computational lower bounds for private query release rely on a connection between the hardness of private query release and \emph{traitor-tracing schemes}, which was first observed by Dwork et al. \cite{DNRRV09}. Traitor-tracing schemes were introduced by Chor, Fiat, and Naor \cite{ChorFiNa94} to help digital content producers identify pirates as they illegally redistribute content. Traitor-tracing schemes are conceptually analogous to the example reidentification scheme we use to obtain our hardness result for private learning. Instantiating this connection with the traitor-tracing scheme of Boneh, Sahai, and Waters \cite{BSW06}, which relies on certain assumptions in bilinear groups, Dwork et al. \cite{DNRRV09} exhibited a family of $2^{\tilde{O}(\sqrt{n})}$ queries for which no efficient algorithm can produce a data structure which could be used to answer all queries in this family. Very recently, Boneh and Zhandry \cite{BonehZh14} constructed a new traitor-tracing scheme based on indistinguishability obfuscation that yields the same infeasibility result for a family of $n\cdot 2^{O(d)}$ queries on records of size $d$. Extending this connection, Ullman \cite{Ullman13} constructed a specialized traitor-tracing scheme to show that no efficient private algorithm can answer more than $\tilde{O}(n^2)$ arbitrary queries that are given as input to the algorithm.

\trimrm{
Dwork et al. \cite{DNRRV09} also showed strong lower bounds against private algorithms for producing \emph{synthetic data}. Synthetic data generation algorithms produce a new ``fake'' dataset, whose rows are of the same type as those in the original dataset, with the promise that the answers to some restricted set of queries on the synthetic dataset well-approximate the answers on the original dataset. Assuming the existence of one-way functions, Dwork et al. \cite{DNRRV09} exhibited an efficiently computable collection of queries for which no efficient private algorithm can produce useful synthetic data. Ullman and Vadhan \cite{UllmanVa11} refined this result to hold even for extremely simple classes of queries.

Nevertheless, the restriction to synthetic data is significant to these results, and they do not rule out the possibility that other privacy-preserving data structures can be used to answer large families of restricted queries. In fact, when the synthetic data restriction is lifted, there are algorithms (e.g. \cite{HardtRoSe12, ThalerUlVa12,  ChandrasekaranThUlWa14, DworkNiTa14}) that answer queries from certain exponentially large families in subexponential time. One can view the problem of synthetic data generation as analogous to proper learning. In both cases, placing natural syntactic restrictions on the output of an algorithm may in fact come at the expense of utility or computational efficiency.
}
\trimtext{\vspace{-2mm}}

\paragraph{Efficiency of SQ Learning.}
Feldman and Kanade \cite{FeldmanKa12} addressed the question of whether information-theoretically efficient SQ learners -- i.e., those making polynomially many queries -- could be made computationally efficient. One of their main negative results showed that unless $\mathbf{NP} = \mathbf{RP}$, there exists a concept class with polynomial query complexity that is not efficiently SQ learnable. Moreover, this concept class is efficiently PAC learnable, which suggests that the restriction to SQ learning can introduce an inherent computational cost.

We show that the concept class $\encthresh$ can be learned (inefficiently) with polynomially many statistical queries. The result of Blum et al. \cite{BlumDwMcNi05} discussed above, showing that SQ learning algorithms can be efficiently simulated by differentially private algorithms, thus shows that $\encthresh$ also separates SQ learners making polynomially many queries from computationally efficient SQ learners.

\begin{corollary}[Informal]
Assuming the existence of strongly correct ORE, the concept class $\encthresh$ is efficiently PAC learnable and has polynomial SQ query complexity, but is not efficiently SQ learnable.
\end{corollary}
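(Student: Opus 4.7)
I would split the corollary into three claims: (i) $\encthresh$ is efficiently PAC learnable, (ii) $\encthresh$ has polynomial SQ query complexity, and (iii) $\encthresh$ is not efficiently SQ learnable. Claim (i) is the content of the main theorem, witnessed by the sort-via-$\Comp$ PAC learner. Claim (iii) follows from the main theorem by contraposition together with the simulation of~\cite{BlumDwMcNi05}: an efficient SQ learner can be compiled into an efficient $(\eps,\delta)$-differentially private PAC learner, which the main theorem rules out for $\encthresh$. Only claim (ii) is new.

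For (ii), I would exhibit a computationally unbounded SQ learner that uses only $\poly(\lambda)$ queries, where $\lambda$ is the security parameter and a fortiori an upper bound on concept description size. Consider the hypothesis class
\[
\cH \;=\; \bigl\{\, h_s : x \mapsto \1[\Comp(x,s) \neq \mathtt{GT}] \;:\; s \in \{0,1\}^{\poly(\lambda)} \,\bigr\}.
\]
Every target concept $c \in \encthresh$ equals some $h_s \in \cH$, obtained by letting $s$ be any ciphertext whose decryption is the target threshold; strong correctness of the ORE guarantees both the existence of such $s$ and that the equality $h_s = c$ holds on \emph{every} input, not just valid ciphertexts. Since $|\cH| \le 2^{\poly(\lambda)}$, a standard halving-style argument SQ-learns $\cH$ with $O(\log|\cH|/\tau^2) = \poly(\lambda)$ queries of tolerance $\tau$: maintain a version space $V \subseteq \cH$ initialized to $\cH$, and at each round use the aggregate SQ query
\[
\phi_V(x,b) \;=\; \1\bigl[\maj_{h \in V} h(x) \neq b\bigr]
\]
to either verify that the current majority has small error against the target or identify a constant fraction of $V$ that disagrees with the target on a reasonable mass of examples, which can then be pruned. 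After $O(\log|\cH|)$ rounds a surviving $\hat s \in V$ yields a hypothesis $h_{\hat s}$ that is $\eps$-close to the target. The queries $\phi_V$ are inefficient to evaluate, since the majority runs over exponentially many $s$, but each is a well-defined function of $(x,b)$ and hence a legal statistical query; this inefficiency is exactly what (iii) forces.

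The step that requires the most care is the robustness of the halving analysis under SQ tolerance: we need the pruning rule to shrink $V$ geometrically given only noisy query answers, and to correctly fire the termination condition when the majority already matches the target well. This is standard SQ boosting for finite classes and should transport to our setting without essential change, but verifying the constants and the handling of approximately-equivalent hypotheses in $V$ is where I would spend the most time. The hypothesis ultimately output is the singleton $h_{\hat s}$, which is efficiently evaluable using only the public comparison key, so it is a valid PAC hypothesis output. Assembling (i), (ii), and (iii) then gives the corollary.
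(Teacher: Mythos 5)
Your decomposition into claims (i)--(iii) matches the paper's, and your handling of (i) and (iii) is exactly right: (i) is the PAC learner from the main theorem, and (iii) follows by contraposition from the privacy lower bound via the SQ-to-DP simulation of Blum et al.\ The problem is claim (ii), where the argument has a genuine gap.

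The halving-from-$\log|\cH|$ step is false in the SQ model, and the counterexample is one the paper itself discusses: parities over $\{0,1\}^d$ have $\log|\cC| = d$, yet require $2^{\Omega(d)}$ statistical queries of inverse-polynomial tolerance (the SQ-dimension lower bound). The reason your version-space argument breaks is precisely the step you flagged as needing care but treated as standard: after the query $\phi_V(x,b) = \1[\maj_{h\in V} h(x) \ne b]$ reports that the majority vote has large error, you know that on a noticeable mass of the distribution at least half of $V$ disagrees with the target, but you are given no example $x$ and therefore no way to name \emph{which} half. In the PAC halving algorithm, the witnessed misclassified example identifies the hypotheses to discard; an SQ oracle never hands you such a witness, so there is no pruning rule, and there is no ``standard SQ boosting for finite classes'' that recovers a $\poly(\log|\cH|)$ bound. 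If there were, parities would be SQ learnable, contradicting the classical lower bound and, for that matter, the entire premise of the Feldman--Kanade construction the paper contrasts against.

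The paper's SQ learner for $\encthresh$ instead exploits the concept class's structure directly. It uses one query to test whether the all-zeros hypothesis is already good; if not, it recovers $\params^r$ bit by bit with $|\params|$ queries of the form $\psi_i(\params,c,b) = \1[\params_i = 1 \wedge b = 1]$ (this works because every concept is supported on a single $\params^r$ and that prefix is visible in the positive examples). It then brute-forces any $\sk$ with $(\sk,\params^r)\in\operatorname{Range}(\Gen)$ --- this is the inefficient step --- and invokes a dedicated lemma showing that strong correctness forces any two secret keys paired with the same $\params$ to decrypt every ciphertext identically, so the recovered $\sk$ is functionally equivalent to $\sk^r$. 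Finally it binary-searches for $t$ in $\log N$ further queries, using $\sk$ to evaluate candidate thresholds inside the query predicate. The binary search succeeds not because $\encthresh$ is finite but because, once $\params$ and $\sk$ are fixed, the remaining family is a nested chain of thresholds, which has small SQ dimension. So your intuition that the hypotheses form a chain once the ``tag'' is pinned down is sound and is what the paper uses; what cannot be salvaged is the generic halving framing, which would need to be replaced by the bit-recovery plus binary-search argument.
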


While our proof relies on much stronger hardness assumptions, it reveals ORE as a new barrier to efficient SQ learning. As discussed in more detail in \trimrm{Section \ref{sec:sq}}\trimtext{the full version of this work}, even though their result is about computational hardness, Feldman and Kanade's choice of a concept class relies crucially on the fact that parities are hard to learn in the SQ model even information-theoretically. By contrast, our concept class $\encthresh$ is computationally hard to SQ learn for a reason that appears fundamentally different than the information-theoretic hardness of SQ learning parities.

\trimtext{\vspace{-2mm}}

\paragraph{Learning from Encrypted Data.}
Several works have developed schemes for training, testing, and classifying machine learning models over encrypted data (e.g. \cite{GraepelLaNa12, BPTG14}). In a model use case, a client holds a sensitive dataset, and uploads an encrypted version of the dataset to a cloud computing service. The cloud service then trains a model over the encrypted data and produces an encrypted classifier it can send back to the client, ideally without learning anything about the examples it received. The notion of privacy afforded to the individuals in the dataset here is complementary to differential privacy. While the cloud service does not learn anything about the individuals in the dataset, its output might still depend heavily on the data of certain individuals.

In fact, our non-differentially private PAC learner for the class $\encthresh$ exactly performs the task of learning over encrypted data, producing a classifier without learning anything about its examples beyond their order (this addresses the difficulty of implementing comparisons from prior work \cite{GraepelLaNa12}). Thus one can interpret our results as showing that not only are these two notions of privacy for machine learning training complementary, but that they may actually be in conflict. Moreover, the strong correctness guarantee we provide for ORE (which applies more generally to multi-input functional encryption) may help enable the theoretical study of learning from encrypted data in other PAC-style settings.

\section{Preliminaries and Definitions}

\label{sec:prelim}

\trimrm{
\subsection{PAC Learning and Private PAC Learning}

For each $k \in \N$, let $X_k$ be an instance space (such as $\{0, 1\}^k$), where the parameter $k$ represents the size of the elements in $X_k$. Let $\cC_k$ be a set of boolean functions $\{c : X_k \to \{0, 1\}\}$. The sequence $(X_1, \cC_1), (X_2, \cC_2), \dots$ represents an infinite sequence of learning problems defined over instance spaces of increasing dimension. We will generally suppress the parameter $k$, and refer to the problem of learning $\cC$ as the problem of learning $\cC_k$ for every $k$.

A learner $L$ is given examples sampled from an unknown probability distribution $\cD$ over $X$, where the examples are labeled according to an unknown {\em target concept} $c\in \cC$. The learner must select a hypothesis $h$ from a hypothesis class $\cH$ that approximates the target concept with respect to the distribution $\cD$. More precisely,

\begin{definition}
The {\em generalization error} of a hypothesis $h:X\rightarrow\{0,1\}$ (with respect to a target concept $c$ and distribution $\cD$) is defined by
$\error_{\cD}(c,h)=\Pr_{x \sim \cD}[h(x)\neq c(x)].$
If $\error_{\cD}(c,h)\leq\alpha$ we say that $h$ is an {\em $\alpha$-good} hypothesis for $c$ on $\cD$.
\end{definition}

\begin{definition}[PAC Learning~\cite{Valiant84}]\label{def:PAC}
Algorithm $L : (X \times \{0, 1\})^n \to \cH$ is an {\em $(\alpha,\beta)$-accurate PAC learner} for the concept class $\cC$ using hypothesis class $\cH$ with sample complexity $n$ if for all target concepts $c \in \cC$ and all distributions $\cD$ on $X$, given an input of $n$ samples $S =((x_i, c(x_i)),\ldots,(x_n, c(x_n)))$, where each $x_i$ is drawn i.i.d.\ from $\cD$, algorithm $L$ outputs a hypothesis $h\in \cH$ satisfying
$\Pr[\error_{\cD}(c,h)  \leq \alpha] \geq 1-\beta$. The probability here is taken over the random choice of the examples in $S$ and the coin tosses of the learner $L$.

The learner $L$ is \emph{efficient} if it runs in time polynomial in the size parameter $k$, the representation size of the target concept $c$, and the accuracy parameters $1/\alpha$ and $1/\beta$. Note that a necessary (but not sufficient) condition for $L$ to be efficient is that its sample complexity $n$ is polynomial in the learning parameters.

 If $\cH\subseteq \cC$ then $L$ is called a {\em proper} learner. Otherwise, it is called an {\em improper} learner.
\end{definition}

Kasiviswanathan et al. \cite{KLNRS11} defined a \emph{private learner} as a PAC learner that is also differentially private. Recall the definition of differential privacy:
\begin{definition}
A learner $L : (X \times \{0, 1\})^n \to \cH$ is  $(\eps, \delta)$-\emph{differentially private} if for all sets $T \subseteq \cH$, and neighboring sets of examples $S \sim S'$,
\[\Pr[L(S) \in T] \le e^{\eps}\Pr[L(S') \in T] + \delta.\]
\end{definition}

}

The technical object that we will use to show our hardness results for differential privacy is what we call an \emph{example reidentification scheme}. It is analogous to the hard-to-sanitize database distributions \cite{DNRRV09, UllmanVa11} and re-identifiable database distributions \cite{BUV14} used in prior works to prove hardness results for private query release, but is adapted to the setting of computational learning. In the first step, an algorithm $\GenTT$ chooses a concept and a sample $S$ labeled according to that concept. In the second step, a learner $L$ receives either the sample $S$ or the sample $S_{-i}$ where an appropriately chosen example $i$ is replaced by a junk example, and learns a hypothesis $h$. Finally, an algorithm $\TraceTT$ attempts to use $h$ to identify one of the rows given to $L$. If $\TraceTT$ succeeds at identifying such a row with high probability, then it must be able to distinguish $L(S)$ from $L(S_{-i})$, showing that $L$ cannot be differentially private. We formalize these ideas below.

\begin{definition}
An $(\alpha, \xi)$-\emph{example reidentification scheme} for a concept class $\cC$ consists of a pair of algorithms, $(\GenTT, \TraceTT)$ with the following properties.
\begin{description}
\item[$\GenTT(k, n)$] Samples a concept $c \in \cC_k$ and an associated distribution $\cD$. Draws i.i.d. examples $x_1, \dots, x_n \getsr \cD$, and a fixed value $x_0$. Let $S$ denote the labeled sample $((x_1, c(x_1)), \dots, (x_n, c(x_n))$, and for any index $i \in [n]$, let $S_{-i}$ denote the sample with the pair $(x_i, c(x_i))$ replaced with $(x_0, c(x_0))$.
\item[$\TraceTT(h)$] Takes state shared with $\GenTT$ as well as a hypothesis $h$ and identifies an index in $[n]$ (or $\bot$ if none is found).
\end{description}
The scheme obeys the following ``completeness'' and ``soundness'' criteria on the ability of $\TraceTT$ to identify an example given to a learner $L$.
\paragraph{Completeness.} A good hypothesis can be traced to some example. That is, for every efficient learner $L$,
\[\Pr[\error_{\cD}(c, h) \le \alpha \land \TraceTT(h) = \bot] \le \xi.\]
Here, the probability is taken over $h\getsr L(S)$ and the coins of $\GenTT$ and $\TraceTT$. 
\paragraph{Soundness.} For every efficient learner $L$, $\TraceTT$ cannot trace $i$ from the sample $S_{-i}$. That is, for all $i \in [n]$,
\[\Pr[\TraceTT(h) = i] \le \xi\]
for $h \getsr L(S_{-i})$.
\end{definition}

\trimrm{
We may sometimes relax the completeness condition to hold only under certain restrictions on $L$'s output (e.g. $L$ is a proper learner or $L$ is a representation learner). In this case, we say the $(\GenTT, \TraceTT)$ is an example reidentification scheme for (properly, representation) learning a class $\cC$.
}

\begin{theorem}\label{thm:diffpac}
Let $(\GenTT, \TraceTT)$ be an $(\alpha, \xi)$-example reidentification scheme for a concept class $\cC$. Then for every $\beta>0$ and polynomial $n(k)$, there is no efficient $(\eps, \delta)$-differentially private $(\alpha, \beta)$-PAC learner for $\cC$ using $n$ samples when
\[\delta < \left(\frac{1 - \beta - \xi}{n}\right) - e^{\eps} \xi.\]
\end{theorem}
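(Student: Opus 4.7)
The plan is a standard fingerprinting-style reduction from tracing to privacy, of the same flavor as the connection between traitor-tracing schemes and lower bounds for private query release. Suppose for contradiction that $L$ is an efficient $(\eps,\delta)$-differentially private $(\alpha,\beta)$-PAC learner for $\cC$ using $n$ samples. I run $\GenTT(k,n)$ to produce the concept $c$, distribution $\cD$, sample $S$, and junk example $x_0$, hand $S$ to $L$ to obtain a hypothesis $h$, and then invoke $\TraceTT(h)$ on the shared state.

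First I would bound below the probability that $\TraceTT$ outputs some index in $[n]$ when $L$ is run on $S$. The PAC guarantee gives $\Pr[\error_\cD(c,h) \le \alpha] \ge 1-\beta$, and the completeness of the reidentification scheme gives $\Pr[\error_\cD(c,h) \le \alpha \wedge \TraceTT(h)=\bot] \le \xi$. Subtracting,
\[
\Pr[\TraceTT(L(S)) \in [n]] \ge 1-\beta-\xi.
\]
By an averaging argument over the $n$ possible outputs, there exists an index $i^\star \in [n]$ with
\[
\Pr[\TraceTT(L(S)) = i^\star] \ge \frac{1-\beta-\xi}{n}.
\]

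Next I apply differential privacy to the neighboring samples $S$ and $S_{-i^\star}$. The one subtle point is that the DP guarantee is stated for a \emph{fixed} pair of neighboring databases and a fixed event $T \subseteq \cH$, whereas here both the databases and the target event $T=\{h : \TraceTT(h)=i^\star\}$ depend on the randomness of $\GenTT$ (and $\TraceTT$). I handle this by first conditioning on the coins of $\GenTT$ and $\TraceTT$: for each such fixing, $S$, $S_{-i^\star}$, and $T$ become deterministic, so DP yields $\Pr_L[L(S)\in T] \le e^{\eps}\Pr_L[L(S_{-i^\star})\in T] + \delta$, and then I take expectation over those coins to obtain
\[
\Pr[\TraceTT(L(S)) = i^\star] \le e^{\eps}\,\Pr[\TraceTT(L(S_{-i^\star})) = i^\star] + \delta.
\]
Finally, the soundness property of the reidentification scheme bounds $\Pr[\TraceTT(L(S_{-i^\star}))=i^\star] \le \xi$. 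Chaining the three inequalities gives
\[
\frac{1-\beta-\xi}{n} \le e^{\eps}\xi + \delta,
\]
i.e.\ $\delta \ge (1-\beta-\xi)/n - e^{\eps}\xi$, contradicting the hypothesis on $\delta$.

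I do not expect any real obstacle: once the reidentification scheme is set up as in the definition, every step is either PAC accuracy, pigeonhole, the definition of DP, or the soundness/completeness axioms. The only place where care is needed is the bookkeeping of probability spaces in the DP step, since the tracing event $T$ is itself randomized through the shared state with $\GenTT$; conditioning on $\GenTT$ and $\TraceTT$ coins before applying DP, then integrating back out, handles this cleanly.
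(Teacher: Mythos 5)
Your proposal is correct and matches the paper's proof essentially step for step: pigeonhole to pick $i^\star$ with $\Pr[\TraceTT(L(S)) = i^\star] \ge (1-\beta-\xi)/n$, then differential privacy to upper-bound this by $e^\eps \Pr[\TraceTT(L(S_{-i^\star})) = i^\star] + \delta$, then soundness to conclude. Your explicit conditioning on the coins of $\GenTT$ and $\TraceTT$ before invoking DP spells out a bookkeeping step the paper leaves implicit, but the argument is the same.
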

In a typical setting of parameters, we will take $\alpha, \beta, \eps = O(1)$ and $\delta, \xi = o(1/n)$, in which case the inequality in Theorem~\ref{thm:diffpac} will be satisfied for sufficiently large $n$.

\trimrm{
\begin{proof}
Suppose instead that there were a computationally efficient $(\eps, \delta)$-differentially private $(\alpha, \beta)$-PAC learner $L$ for $\cC$ using $n$ samples. Then there exists an $i \in [n]$ such that $\Pr[\TraceTT(L(S)) = i] \ge (1-\beta -\xi) / n$. However, since $L$ is differentially private, \[\Pr[\TraceTT(L(S_{-i})) = i] \ge e^{-\eps}\left(\frac{1-\beta-\xi}{n} - \delta\right) > \xi(n),\] which contradicts the soundness of $(\GenTT, \TraceTT)$.  
\end{proof}
}

\subsection{Order-Revealing Encryption}

\begin{definition} An Order-Revealing Encryption (ORE) scheme is a tuple $(\Gen,\Enc,\Dec,\Comp)$ of algorithms where:
	\begin{itemize}
		\item $\Gen(1^\lambda,1^\ell)$ is a randomized procedure that takes as inputs a security parameter $\lambda$ and plaintext length $\ell$, and outputs a secret encryption/decryption key $\sk$ and public parameters $\params$.
		\item $\Enc(\sk,m)$ is a potentially randomized procedure that takes as input a secret key $\sk$ and a message $m\in\{0,1\}^\ell$, and outputs a ciphertext $c$.
		\item $\Dec(\sk,c)$ is a deterministic procedure that takes as input a secret key $\sk$ and a ciphertext $c$, and outputs a plaintext message $m\in\{0,1\}^\ell$ or a special symbol $\bot$.
		\item $\Comp(\params,c_0,c_1)$ is a deterministic procedure that ``compares'' two ciphertexts, outputting either ``$>$'', ``$<$'', ``$=$'', or $\bot$.
	\end{itemize}
\end{definition}

\paragraph{Correctness.} An ORE scheme must satisfy two separate correctness requirements:
\begin{itemize}
	\item {\bf Correct Decryption:} This is the standard notion of correctness for an encryption scheme, which says that decryption succeeds.  We will only consider \emph{strongly} correct decryption, which requires that decryption \emph{always} succeeds.  For all security parameters $\lambda$ and message lengths $\ell$,
		\[\Pr[\Dec(\sk,\;\Enc(\sk,m)\;) = m: (\sk,\params)\gets\Gen(1^\lambda,1^\ell)] = 1.\]

	\trimrm{
	
	\item {\bf Correct Comparison:} We require that the comparison function succeeds.  We will consider two notions, namely \emph{strong} and \emph{weak}. In order to define these notions, we first define two auxiliary functions:
	\begin{itemize}
		\item $\Comp_{plain}(m_0,m_1)$ is just the plaintext comparison function. That is, for $m_0<m_1$, $\Comp_{plain}(m_0,m_1)=``<"$, $\Comp_{plain}(m_1,m_0)=``>"$, and $\Comp_{plain}(m_0,m_0)=``="$.
		\item $\Comp_{ciph}(\sk,c_0,c_1)$ is a ciphertext comparison function which uses the secret key.  If first computes $m_b=\Dec(\sk,c_b)$ for $b=0,1$.  If either $m_0=\bot$ or $m_1=\bot$ (in other words, if either decryption failed), then $\Comp_{ciph}$ outputs $\bot$.  If both $m_0,m_1\neq\bot$, then the output is $\Comp_{plain}(m_0,m_1)$.
	\end{itemize}
	
	Now we define our comparison correctness notions:
	
	\begin{itemize}
		\item {\bf Weakly Correct Comparison:} This informally requires that comparison is consistent with encryption.  For all security parameters $\lambda$, message lengths $\ell$, and messages $m_0,m_1\in\{0,1\}^\ell$, 
			\[\Pr\left[\Comp(\params,c_0,c_1)=\Comp_{plain}(m_0,m_1):\begin{array}{c}(\sk,\params)\gets\Gen(1^\lambda,1^\ell)\\ c_b\gets\Enc(\sk,m_b)\end{array}\right]=1.\]
		In particular, for correctly generated ciphertexts, $\Comp$ never outputs $\bot$.
		
		\item {\bf Strongly Correct Comparison:} This informally requires that comparison is consistent with \emph{decryption}.  For all security parameters $\lambda$, message lengths $\ell$, and ciphertexts $c_0,c_1$, 
			\[\Pr\left[\Comp(\params,c_0,c_1)=\Comp_{ciph}(\sk,c_0,c_1):(\sk,\params)\gets\Gen(1^\lambda,1^\ell)\right]=1.\]
		
	\end{itemize}
	
	}
	
	\trimtext{
	
	\item {\bf Strongly Correct Comparison:} This informally requires that comparison is consistent with \emph{decryption}.  For all security parameters $\lambda$, message lengths $\ell$, and ciphertexts $c_0,c_1$, 
			\[\Pr\left[\Comp(\params,c_0,c_1)=\Comp_{ciph}(\sk,c_0,c_1):(\sk,\params)\gets\Gen(1^\lambda,1^\ell)\right]=1.\]
	Here, the function $\Comp_{ciph}$ computes $m_b=\Dec(\sk,c_b)$ for $b=0,1$.  If either $m_0=\bot$ or $m_1=\bot$ (in other words, if either decryption failed), then $\Comp_{ciph}$ outputs $\bot$.  If both $m_0,m_1\neq\bot$, then the output is $``<", ``>"$, or $``="$ if $m_0 < m_1$, $m_0 > m_1$, or $m_0 = m_1$, respectively.
			
	}
	
\end{itemize} 

\paragraph{Security.} For security, we will consider a relaxation of the ``best possible'' security notion of Boneh et al.~\cite{BLRSZZ15}.  Namely, we only consider static adversaries that submit all queries at once.  ``Best possible'' security is a modification of the standard notion of CPA security for symmetric key encryption to block trivial attacks.  That is, since the comparison function always leaks the order of the plaintexts, the left and right sets of challenge messages must have the same order.  In our relaxation where all challenge messages are queried at once, we can therefore assume without loss of generality that the left and right sequences of messages are sorted in ascending order.  For simplicity, we will also disallow the adversary from querying on the same message more than once.  This gives the following definition:

\begin{definition} An ORE scheme $(\Gen,\Enc,\Dec,\Comp)$ is \emph{statically secure} if, for all efficient adversaries $\cA$,  $|\Pr[W_0]-\Pr[W_1]|$ is negligible, where $W_b$ is the event that $\cA$ outputs $1$ in the following experiment:
	\begin{itemize}
		\item $\cA$ produces two message sequences $m_1^{(L)}<m_2^{(L)}<\dots<m_q^{(L)}$ and $m_1^{(R)}<m_2^{(R)}<\dots<m_q^{(R)}$
		\item The challenger runs $(\sk,\params)\gets\Gen(1^\lambda,1^\ell)$.  It then responds to $\cA$ with $\params$, as well as $c_1,\dots,c_q$ where 
		\[c_i=\begin{cases}
			\Enc(\sk,m_i^{(L)})&\text{if }b=0\\
			\Enc(\sk,m_i^{(R)})&\text{if }b=1
		\end{cases}\]
		\item $\cA$ outputs a guess $b'$ for $b$.
	\end{itemize}
\end{definition}

\trimrm{

We also consider a weaker definition, which only allows the sequences $m_i^{(L)}$ and $m_i^{(R)}$ to differ at a single point:

\begin{definition} An ORE scheme $(\Gen,\Enc,\Dec,\Comp)$ is \emph{statically single-challenge secure} if, for all efficient adversaries $\cA$,  $|\Pr[W_0]-\Pr[W_1]|$ is negligible, where $W_b$ is the event that $\cA$ outputs $1$ in the following experiment:
	\begin{itemize}
		\item $\cA$ produces a sequence of messages $m_1<m_2<\dots<m_q$, and challenge messages $m_L,m_R$ such that $m_i<m_L<m_R<m_{i+1}$ for some $i\in[q-1]$.
		\item The challenger runs $(\sk,\params)\gets\Gen(1^\lambda,1^\ell)$.  It then responds to $\cA$ with $\params$, as well as $c_1,\dots,c_q$ where $c_i=\Enc(\sk,m_i)$ and 
		\[c^*=\begin{cases}
			\Enc(\sk,m_L)&\text{if }b=0\\
			\Enc(\sk,m_R)&\text{if }b=1
			\end{cases}\]
		\item $\cA$ outputs a guess $b'$ for $b$.
	\end{itemize}
\end{definition}

We now argue that these two definitions are equivalent up to some polynomial loss in security.  

\begin{theorem}\label{thm:singlemessage}$(\Gen,\Enc,\Dec,\Comp)$ is statically secure if and only if it is statically single-challenge secure.
\end{theorem}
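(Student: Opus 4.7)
A statically single-challenge adversary $\cA$ is just a special case of a static one: given $\cA$'s normal queries $m_1<\dots<m_q$ and challenge pair $(m_L,m_R)$ with $m_i<m_L<m_R<m_{i+1}$, the static reducer submits the two length-$(q+1)$ sorted sequences obtained by inserting $m_L$ (resp.\ $m_R$) into position $i+1$, flags the $(i+1)$-st returned ciphertext as the single-challenge $c^*$, forwards the remaining $q$ to $\cA$, and echoes $\cA$'s guess. The distinguishing advantage is preserved exactly.

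\textbf{Backward direction: hybrids.} Assume WLOG that the $2q$ messages in $\vec m^{(L)}\cup\vec m^{(R)}$ are distinct (shared messages stay fixed across all hybrids and can be treated as public constants). Let $\mu_1<\mu_2<\dots<\mu_{2q}$ be the combined sorted list. I identify each hybrid configuration with a size-$q$ subset $S\subseteq[2q]$: the $q$ ciphertexts encrypt $\{\mu_k:k\in S\}$ in sorted order, so that the $r$-th ciphertext always encrypts the $r$-th smallest active $\mu_k$. The starting configuration $S_0$ is the set of indices of $\vec m^{(L)}$ in the combined list, and the target $S_T$ is the set of indices of $\vec m^{(R)}$.

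\textbf{Adjacent-swap sequence and reduction.} I will exhibit a sequence $S_0,S_1,\dots,S_T$ with $T=O(q^2)$ where each consecutive pair differs by an \emph{adjacent swap}: some $k\in S_{t-1}$ is replaced by $k\pm 1\notin S_{t-1}$. This is the standard fact that the minimum number of adjacent transpositions to convert one length-$2q$ weight-$q$ $01$-string into another equals its inversion distance, bounded by $\binom{2q}{2}=O(q^2)$. Each such swap preserves the sorted-order invariant on ciphertexts because $\mu_k$ and $\mu_{k\pm 1}$ occupy the same rank in any configuration containing exactly one of them. To distinguish $H_{t-1}$ from $H_t$, the single-challenge reducer $\cB$ submits as normal queries the $q-1$ plaintexts common to the two hybrids together with two sentinel values smaller than $\mu_1$ and larger than $\mu_{2q}$ (drawn from slack in the plaintext space $\bits^\ell$), and submits $(\mu_k,\mu_{k\pm 1})$ as the challenge pair. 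Validity of the single-challenge constraint $m_i<m_L<m_R<m_{i+1}$ is automatic: $\mu_k$ and $\mu_{k\pm 1}$ are consecutive in the combined sorted list so no other submitted message can lie strictly between them, and the sentinels guarantee that the challenge pair lands in an interior slot. $\cB$ discards the sentinel ciphertexts, inserts the challenge ciphertext at its proper rank among the remaining $q-1$ normals, and hands the $q$ ciphertexts to $\cA$. A standard guess-the-hybrid-step union bound then gives $\cA$'s static advantage at most $T\cdot\cB$'s single-challenge advantage $=O(q^2)\cdot\mathrm{negl}$.

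\textbf{Main obstacle.} The combinatorial core---that adjacent swaps on the $01$-characteristic vector can traverse from $S_0$ to $S_T$ in polynomially many moves---together with the sentinel padding required to meet the ``interior challenge slot'' condition of the single-challenge definition, are the two technical points I need to verify carefully. Once these are in place, the rest of the hybrid-to-reduction translation is mechanical.
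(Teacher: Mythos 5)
Your proof is correct but takes a different route through the backward direction (single-challenge implies static). The paper constructs exactly $2q$ hybrids by first walking each coordinate down to $\min(m_j^{(L)},m_j^{(R)})$ (for $j=1,\dots,q$) and then up to $m_j^{(R)}$; because the pointwise minimum of two strictly increasing sequences is itself strictly increasing, every intermediate sequence is automatically sorted, and adjacent hybrids differ in at most one coordinate. You instead identify each hybrid with a $q$-element subset of $[2q]$ (the positions in the combined sorted list of all $2q$ messages) and walk from the ``left'' subset to the ``right'' subset via $O(q^2)$ adjacent transpositions, invoking the inversion-distance bound as a separate combinatorial lemma. Both are polynomial in $q$, so both establish the theorem; the paper's decomposition is tighter (linear in $q$) and needs no extra combinatorics, while yours makes the ``challenge pair is adjacent in the combined ordering'' structure explicit, which more cleanly justifies the $m_i<m_L<m_R<m_{i+1}$ constraint in the single-challenge game. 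Your observation that sentinel queries below $\mu_1$ and above $\mu_{2q}$ are needed to satisfy the requirement $i\in[q-1]$ is a real detail that the paper's proof glosses over --- it has the same boundary issue when the changed coordinate is $j=1$ or $j=q$ --- though note that your sentinels only exist if the plaintext space has slack below $\mu_1$ and above $\mu_{2q}$, a minor technicality that neither proof fully dispatches.
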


\begin{proof} We prove that single-challenge security implies many-challenge security through a sequence of hybrids.  Each hybrid will only differ in the messages $m_i$ that are encrypted, and each adjacent hybrid will only differ in a single message.  The first hybrid will encrypt $m_i^{(L)}$, and the last hybrid will encrypt $m_i^{(R)}$.  Thus, by applying the single-challenge security for each hybrid, we conclude that the first and last hybrids are indistinguishable, thus showing many-challenge security.
	
\paragraph{Hybrid $j$ for $j\leq q$.} 
\[m_i=\begin{cases}
	\min(m_i^{(L)},m_i^{(R)})&\text{if }i\leq j\\
	m_i^{(L)}&\text{if }i>j
\end{cases}\]
First, notice that all the $m_i$ are in order since both sequences $m_i^{(L)}$ and $m_i^{(R)}$ are in order.  Second, the only difference between {\bf Hybrid $(j-1)$} and {\bf Hybrid $j$} is that $m_j=m_j^{(L)}$ in {\bf Hybrid $(j-1)$} and $m_j=\min(m_j^{(L)},m_j^{(R)})$ in {\bf Hybrid $j$}.  Thus, single-challenge security implies that each adjacent hybrid is indistinguishable.  Moreover, for $j$ where $m_j^{(L)}<m_j^{(R)}$, the two hybrids are actually identical.

\paragraph{Hybrid $j$ for $j> q$.} 
\[m_i=\begin{cases}
\min(m_i^{(L)},m_i^{(R)})&\text{if }i\leq 2q-j\\
m_i^{(R)}&\text{if }i>2q-j
\end{cases}\]
Again, notice that all the $m_i$ are in order.  Moreover, the only different between {\bf Hybrid $(2q-j)$} and {\bf Hybrid $(2q-j+1)$} is that $m_{j}=\min(m_j^{(L)},m_j^{(R)})$ in {\bf Hybrid $(2q-j)$} and $m_j=m_j^{(R)}$ in {\bf Hybrid $(2q-j+1)$}.  Thus, single-challenge security implies that each adjacent hybrid is indistinguishable.  Moreover, for $j$ where $m_j^{(L)}>m_j^{(R)}$, the two hybrids are actually identical.
	
\end{proof}

}

\section{The Concept Class $\encthresh$ and its Learnability}

Let $(\Gen, \Enc, \Dec, \Comp)$ be a statically secure ORE scheme with strongly correct comparison. We define a concept class $\encthresh$, which intuitively captures the class of threshold functions where examples are encrypted under the ORE scheme. Throughout this discussion, we will take $N = 2^\ell$ and regard the plaintext space of the ORE scheme to be $[N] = \{1, \dots, N\}$. Ideally we would like, for each threshold $t \in [N+1]$ and each $(\sk, \params) \gets \Gen(1^\lambda)$, to define a concept 
\[f_{t, \sk, \params}(c) = \begin{cases}
1 & \text{ if } \Dec_{\sk}(c) < t \\
0 & \text{ otherwise.}
\end{cases}\]
However, we need to make a few technical modifications to ensure that $\encthresh$ is efficiently PAC learnable.
\begin{enumerate}
\item In order for the learner to be able to use the comparison function $\Comp$, it must be given the public parameters $\params$ generated by the ORE scheme. We address this in the natural way by attaching a set of public parameters to each example. Moreover, we define $\encthresh$ so that each concept is supported on the single set of public parameters that corresponds to the secret key used for encryption and decryption.
\item Only a subset of binary strings form valid $(\sk, \params)$ pairs that are actually produced by $\Gen$ in the ORE scheme. To represent concepts, we need a reasonable encoding scheme for these valid pairs. The encoding scheme we choose is the polynomial-length sequence of random coin tosses used by the algorithm $\Gen$ to produce $(\sk, \params)$.
\end{enumerate}

We now formally describe the concept class $\encthresh$. Each concept is parameterized by a string $r$, representing the coin tosses of the algorithm $\Gen$, and a threshold $t \in [N+1]$ for $N = 2^\ell$. In what follows, let $(\sk^r, \params^r)$ be the keys output by $\Gen(1^\lambda, 1^\ell)$ when run on the sequence of coin tosses $r$. Let
\[f_{t, r}(\params, c) = \begin{cases}
1 & \text{if } (\params = \params^r) \land (\Dec(\sk^r, c) \ne \bot) \land (\Dec(\sk^r, c) < t) \\
0 & \text{otherwise.}
\end{cases}\]
Notice that given $t$ and $r$, the concept $f_{t, r}$ can be efficiently evaluated. The description length $k$ of the instance space $X_k = \{0, 1\}^k$ is polynomial in the security parameter $\lambda$ and plaintext length $\ell$.

\subsection{An Efficient PAC Learner for $\encthresh$}

\trimtext{In the full version of this work, we}\trimrm{We} argue that $\encthresh$ is efficiently PAC learnable by formalizing the argument from the introduction. Because we need to include the ORE public parameters in each example, the PAC learner $L$ \trimrm{(Algorithm \ref{fig:pac-learner})} for $\encthresh$ actually works in two stages. In the first stage, $L$ determines whether there is significant probability mass on examples corresponding to some public parameters $\params$. Recall that each concept in $\encthresh$ is supported on exactly one such set of parameters. If there is no significant mass on any $\params$, then the all-zeroes hypothesis is a good hypothesis. On the other hand, if there is a heavy set of parameters, the learner $L$ applies $\Comp$ using those parameters to learn a good comparator.

\begin{theorem}
Let $\alpha, \beta > 0$. There exists a PAC learning algorithm $L$ for the concept class $\encthresh$ achieving error $\alpha$ and confidence $1-\beta$. Moreover, $L$ is efficient (running in time polynomial in the parameters $k, 1/\alpha, \log(1/\beta)$).
\end{theorem}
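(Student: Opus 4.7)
The plan is to formalize and analyze the two-stage algorithm sketched immediately before the theorem statement. Draw $n = \Theta((1/\alpha) \log(1/\beta))$ labeled examples. If none is labeled $1$, output the constant hypothesis $h \equiv 0$. Otherwise, read $\params^*$ from any positively labeled example (this must equal $\params^r$ by the definition of $f_{t, r}$), use $\Comp(\params^*, \cdot, \cdot)$ to identify the ciphertext $c^*$ appearing in the largest positively labeled example, and output the hypothesis $h(\params, c)$ that returns $1$ iff $\params = \params^*$ and $\Comp(\params^*, c, c^*) \in \{\text{``}{<}\text{''}, \text{``}{=}\text{''}\}$.

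For correctness, I would appeal directly to strong correctness of $\Comp$: it returns $\bot$ on any pair where at least one ciphertext is invalid, and otherwise matches the comparison of the plaintexts decrypted under $\sk^r$. Combined with the fact that $c^*$ is itself valid (since it was labeled $1$), this shows that $h$ and $f_{t, r}$ disagree exactly on the ``gap'' region $G = \{(\params^r, c) : c \text{ valid}, \Dec(\sk^r, c^*) < \Dec(\sk^r, c) < t\}$, which lies entirely inside the positive region $R = \{(\params^r, c) : c \text{ valid}, \Dec(\sk^r, c) < t\}$ of $f_{t,r}$.

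For utility, let $p^+ = \cD(R)$ and let $R^{(\alpha)}$ be the ``top'' subregion of $R$ (in plaintext order) of $\cD$-mass $\min(\alpha, p^+)$. If $p^+ \leq \alpha$, every hypothesis that only predicts $1$ on elements of $R$---including both of our candidate outputs---has error at most $p^+ \leq \alpha$. If $p^+ > \alpha$, then as soon as some sample lands in $R^{(\alpha)}$, a positively labeled example is seen and $c^*$ is chosen high enough to force $\cD(G) \leq \alpha$; the probability that no sample lands in $R^{(\alpha)}$ is at most $(1 - \alpha)^n \leq \beta$ by the choice of $n$. Combining the cases, the total failure probability is at most $\beta$.

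The main subtlety of the argument is the reliance on strong correctness of the underlying ORE scheme: this is what guarantees that $\Comp$ induces a consistent order on \emph{all} ciphertexts in the support of $\cD$, including maliciously malformed ones, so that the gap analysis above actually applies under an arbitrary (worst-case) distribution $\cD$. Without this property, adversarial distributions could concentrate mass on ciphertexts where $\Comp$ misbehaves and invalidate the threshold-learning bound. Efficiency is then immediate, since the algorithm performs $O(n)$ comparisons and outputs a hypothesis of description size polynomial in $k$.
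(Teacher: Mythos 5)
Your proposal is correct and follows essentially the same approach as the paper: the same two-stage algorithm (output $h \equiv 0$ if no positive example is seen; otherwise pick $\params^*$ from a positive example, find the largest positive ciphertext via $\Comp$, and output the induced threshold hypothesis), the same appeal to strong correctness to get a consistent total order on all ciphertexts, the same observation that $h$ has one-sided error confined to a "gap" region, and the same case split on whether $\cD(\pos) \geq \alpha$. The only minor imprecision is your appeal to a "top subregion $R^{(\alpha)}$ of mass exactly $\min(\alpha, p^+)$," which may not exist as stated when $\cD$ has atoms; the paper handles this by instead defining $\hat{t}$ as the largest threshold with $\error_{\cD}(f_{\hat{t},r}, f_{t,r}) \geq \alpha$ and arguing that landing a sample in $[\hat t, t)$ forces $f_{\hat t+1,r} \leq h \leq f_{t,r}$, which is the rigorous version of the same idea.
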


\trimrm{

\begin{algorithm}[h]
\caption{Learner $L$ for $\encthresh$}
\label{fig:pac-learner}
\begin{enumerate}
\item Request examples $\{(\params_1, c_1, b_1), \dots, (\params_n, c_n, b_n)\}$ for $n = \lceil\log(1/\beta)/\alpha\rceil$.
\item Identify an $i$ for which $b_i = 1$ and set $\params^* = \params_i$; if no such $i$ exists, return $h \equiv 0$.
\item Let $G = \{j : \params_j = \params^*,b_j = 1\}$. Let $j^* \in G$ be an index with $\Comp(\params^*, c_j, c_{j^*}) \in \{<, =, \bot\}$ for all $j \in G$.
\item Return $h$ defined by
\[h(\params, c) = \begin{cases}
1 \quad \text{if } (\params = \params^*) \land (\Comp(\params^*, c, c_{j^*}) \in \{<, =\}) \\
0 \quad \text{otherwise.}
\end{cases}\]
\end{enumerate}
\end{algorithm}

\begin{proof}

Fix a target concept $f_{t, r} \in \encthresh_k$ and a distribution $\cD$ on examples. First observe that the learner $L$ always outputs a hypothesis with one-sided error, i.e. we always have $h \le f_{t, r}$ pointwise. Also observe that $f_{t',r}\leq f_{t,r}$ pointwise for any $t'<t$.  These both follow from the strong correctness of the ORE scheme. Let $(\sk^r, \params^r)$ denote the keys output by $\Gen(1^\lambda, 1^\ell)$ when run on the sequence of coin tosses $r$. Let $\pos$ denote the set of examples $(\params, c)$ on which $f_{t, r}(\params, c) = 1$. We divide the analysis of the learner in to two cases based on the weight $\cD$ places on $\pos$.

\paragraph{Case 1:} $\cD$ places weight at least $\alpha$ on $\pos$. Define $\hat{t} \in [N+1]$ as the largest $\hat{t} \le t$ such that $\error_{\cD}(f_{\hat{t}, r}, f_{t, r}) \ge \alpha$.  Such a $\hat{t}$ is guaranteed to exist since $f_{0,r}$ is the all-zeros function, and therefore $\error_{\cD}(f_{0,r},f_{t,r})$ is equal to the weight $\cD$ places on $\pos$, which is at least $\alpha$.  

Suppose $f_{\hat{t}+1,r}\leq h$ pointwise.  Since $h$ has one-sided error (that is, $h\leq f_{t,r}$ pointwise), we have $\error_{\cD}(f_{\hat{t}+1,r},f_{t,r})=\error_{\cD}(f_{\hat{t}+1,r},h)+\error_{\cD}(h,f_{t,r})$, or \[\error_{\cD}(h,f_{t,r})=\error_{\cD}(f_{\hat{t}+1,r},f_{t,r})-\error_{\cD}(f_{\hat{t}+1,r},h)\le \error_{\cD}(f_{\hat{t}+1,r},f_{t,r}) <\alpha.\]

Therefore, it suffices to show that $f_{\hat{t}+1,r}\leq h$ with probability at least $1-\beta$. This is guaranteed as long as $L$ receives a sample $(\params^r, c_i, 1)$ with $\hat{t} \le \Dec(\sk^r, c_i) < t$.  In other words, $f_{t,r}(\params^r,c_i)=1$ and $f_{\hat{t},r}(\params^r,c_i)=0$.  Since $f_{\hat{t},r}\le f_{t,r}$ pointwise, such samples exactly account for the error between $f_{\hat{t},r}$ and $f_{t,r}$.  Thus since $\error_{\cD}(f_{\hat{t}, r}, f_{t, r}) \ge \alpha$, for each $i$ it must be that $\hat{t} \le \Dec(\sk^r, c_i) < t$ with probability at least $\alpha$.  The learner $L$ therefore receives \emph{some} sample $c_i$ with $\hat{t} \le \Dec(\sk^r, c_i) < t$ with probability at least  $1- (1-\alpha)^n \ge 1-\beta$ (since we took $n \ge \log(1/\beta)/\alpha$).

\paragraph{Case 2:} $\cD$ places less than $\alpha$ weight on $\pos$. Then the identically zero hypothesis has error at most $\alpha$, so the claim holds because $0 \le h \le f_{t, r}$.

\end{proof}

}

\subsection{Hardness of Privately Learning $\encthresh$}

We now \trimrm{prove}\trimtext{sketch} the hardness of privately learning $\encthresh$ by constructing an example reidentification scheme for this concept class. Recall that an example reidentification scheme consists of two algorithms, $\GenTT$, which selects a distribution, a concept, and examples to give to a learner, and $\TraceTT$ which attempts to identify one of the examples the learner received. 

Our example reidentification scheme yields a hard distribution even for \emph{weak-learning}, where the error parameter $\alpha$ is taken to be inverse-polynomially close to $1/2$.

\begin{theorem} \label{thm:reid-main}
Let $\gamma(n)$ and $\xi(n)$ be noticeable functions. Let $(\Gen, \Enc, \Dec, \Comp)$ be a statically single-challenge secure ORE scheme. Then there exists an (efficient) $(\alpha = \frac{1}{2} - \gamma, \xi)$-example reidentification scheme $(\GenTT, \TraceTT)$ for the concept class $\encthresh$.
\end{theorem}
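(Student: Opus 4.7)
The plan is to construct an example reidentification scheme $(\GenTT, \TraceTT)$ whose completeness leverages the accuracy of good learners for $\encthresh$ and whose soundness reduces to single-challenge ORE security.

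$\GenTT(\lambda, n)$ generates $(\sk, \params) \gets \Gen(1^\lambda, 1^\ell; r)$, samples $n$ random well-spaced plaintexts $m_1 < \dots < m_n$ from $[N]$, encrypts to obtain $c_j = \Enc(\sk, m_j)$, picks a random gap index $u^* \in [n-1]$, and chooses a threshold $t^*$ strictly between $m_{u^*}$ and $m_{u^*+1}$, defining the target concept $f_{t^*, r}$. The distribution $\cD$ is uniform over $\{(\params, c_j)\}_{j=1}^n$, and the sample $S$ consists of $n$ i.i.d.\ draws from $\cD$, labeled by $f_{t^*, r}$. The tracer $\TraceTT(h)$ uses $\sk$ from the shared state to compute $h(\params, c_j)$ for each $j$, extracts the transition position $j^* \in [n]$ at which $h$'s predictions switch from $1$ to $0$ (via a robust rule such as a best-fit threshold against the step function), and outputs a sample index $i$ whose ciphertext is $c_{j^*}$, with tie-breaking chosen so that the output is an asymmetric function of $j^*$.

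For completeness, a standard averaging argument shows that any $h$ with error at most $1/2 - \gamma$ must have its extracted transition $j^*$ coincide with $u^*$ with noticeable probability, and by standard PAC arguments the tracer outputs a valid index with high probability; independent repetitions of the scheme can drive the failure rate below any target $\xi$.

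The hardest step is soundness, which I would prove by reduction to single-challenge ORE security. Assume $\Pr[\TraceTT(L(S_{-i})) = i] > \xi$ for some efficient $L$ and some index $i$. Construct an ORE distinguisher that pre-queries all required non-challenge plaintexts (the other sample plaintexts $\{m_j\}_{j \ne i}$, plus any auxiliary probes used by the tracer) and submits a single challenge pair $(m_L, m_R)$ with $m_{i-1} < m_L < t^* < m_R < m_{i+1}$, obtaining a ciphertext $c^*$ that plays the role of $c_i$. The simulated $L(S_{-i})$ never sees $c^*$, so $h$'s distribution is identical in both ORE worlds; but the tracer's evaluation of $h$ on $c^*$ depends on its true label, which is $1$ in world L and $0$ in world R, shifting the extracted transition $j^*$ between $i$ and $i - 1$ across the two worlds and producing a noticeable gap in $\Pr[\TraceTT = i]$ --- a contradiction with single-challenge ORE security. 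The most delicate aspects are arranging the tracing rule so this gap survives the randomization inherent in the i.i.d.\ sample, and ensuring every tracer probe can be obtained through the single-challenge experiment's pre-query batch.
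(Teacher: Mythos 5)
Your construction diverges from the paper's in two ways worth noting, but the decisive problem is in the soundness reduction. First, a smaller issue: you take $\cD$ to be uniform over the $n$ fixed ciphertexts $c_1,\dots,c_n$ rather than (as the paper does) over fresh encryptions of a uniformly random plaintext in $[N]$. With your $\cD$, an i.i.d.\ sample of size $n$ misses roughly a $1/e$ fraction of the $c_j$'s, so the "position whose ciphertext is $c_{j^*}$" that your tracer wants to accuse frequently does not exist, and the completeness bound you sketch does not obviously hold; "independent repetitions" also does not clearly help because the tracer receives only one $h$.

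The essential gap is in soundness. Your single-challenge reduction obtains one challenge ciphertext $c^*$ (encrypting either $m_L$ or $m_R$ straddling $t^*$), runs $L$ on $S_{-i}$ (which does not involve $c^*$), and then decides based on how the tracer behaves when $c^*$ is substituted in. The distinguisher's advantage is governed by $p - q$ where $p = \Pr[h(\params,\Enc(\sk,m_L))=1]$ and $q = \Pr[h(\params,\Enc(\sk,m_R))=1]$, with the probability taken over the encryption randomness for a fixed $h$. You only know that $p - q \ge \gamma/\mathrm{poly}(n)$ with probability $\xi$ over the draw of $h$; for the remaining $1-\xi$ mass of $h$'s, $p - q$ can be negative, and the expectation $\mathbb{E}_h[p - q]$ can be driven to $0$ (or negative), so the reduction gives no contradiction. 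This is exactly the subtlety the paper's proof calls out. The paper's fix is a \emph{two-message} challenge: it challenges on sequences that differ in two positions, where the "left" sequence takes both extra messages from the same (randomly chosen) bucket $B_{i-1}$ or $B_i$, and the "right" sequence takes one from each, and the adversary guesses based on whether $h$ \emph{agrees} on the two resulting challenge ciphertexts. The resulting advantage is $\tfrac{1}{2}(p-q)^2$, which is nonnegative for every $h$, so the noticeable-probability event $p - q \ge \gamma/2n$ yields a noticeable expected advantage without any cancellation. One then invokes the paper's equivalence between static many-challenge and static single-challenge security to conclude. Without this (or some other device that makes the distinguisher's per-$h$ advantage one-sided), your reduction does not go through.
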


We \trimrm{start with}\trimtext{give} an informal description of the scheme $(\GenTT, \TraceTT)$. The algorithm $\GenTT$ sets up the parameters of the ORE scheme, chooses the ``middle'' threshold concept corresponding to $t = N/2$, and sets the distribution on examples to be encryptions of uniformly random messages (together with the correct public parameters needed for comparison). Let $m_1 < m_2 < \dots < m_n$ denote the sorted sequence of messages whose encryptions make up the sample produced by $\GenTT$ (with overwhelming probability, they are indeed distinct). We can thus break the plaintext space up into buckets of the form $B_i = [m_i, m_{i+1})$. Suppose $L$ is a (weak) learner that produces a hypothesis $h$ with advantage $\gamma$ over random guessing. Such a hypothesis $h$ must be able to distinguish encryptions of messages $m \le t$ from encryptions of messages $m > t$ with advantage $\gamma$. Thus, there must be a pair of adjacent buckets $B_{i-1}, B_i$ for which $h$ can distinguish encryptions of messages from $B_{i-1}$ from encryptions from $B_i$ with advantage $\frac{\gamma}{n}$.

This observation leads to a natural definition for $\TraceTT$: locate a pair of adjacent buckets $B_{i-1}, B_i$ that $h$ distinguishes, and output the identity $i$ of the example separating those buckets. Completeness of the resulting scheme, i.e. the fact that some example is reidentified when $L$ succeeds, follows immediately from the preceding discussion. We argue soundness, i.e. that an example absent from $L$'s sample is not identified, by reducing to the static security of the ORE scheme. The intuition is that if $L$ is not given example $i$, then it should not be able to distinguish encryptions from bucket $B_{i-1}$ from encryptions from bucket $B_{i}$.

To make the security reduction somewhat more precise, suppose for the sake of contradiction that there is an efficient algorithm $L$ that violates the soundness of $(\GenTT, \TraceTT)$ with noticeable probability $\xi$. That is, there is some $i$ such that even without example $i$, the algorithm $L$ manages to produce (with probability $\xi$) a hypothesis $h$ that distinguishes $B_{i-1}$ from $B_{i}$. A natural first attempt to violate the security of the ORE is to construct an adversary that challenges on the message sequences $m_1 <\dots < m_{i-1} < m_i^{(L)} < m_{i+1}, < , m_n$ and $m_1 < \dots < m_{i-1} < m_i^{(R)} < m_{i+1} < \dots < m_n$, where $m_i^{(L)}$ is randomly chosen from $B_{i-1}$ and $m_i^{(R)}$ is randomly chosen from $B_{i}$. Then if $h$ can distinguish $B_{i-1}$ from $B_i$, the adversary can distinguish the two sequences. Unfortunately, this approach fails for a somewhat subtle reason. The hypothesis $h$ is only guaranteed to distinguish $B_{i-1}$ from $B_{i}$ \emph{with probability $\xi$}. If $h$ fails to distinguish the buckets -- or distinguishes them in the opposite direction -- then the adversary's advantage is lost.

To overcome this issue, we instead rely on the security of the ORE for sequences that differ on \emph{two} messages.  For the ``left'' challenge, our adversary samples two messages from the same randomly chosen bucket, $B_{i-1}$ or $B_{i}$ (in addition to requesting encryptions of $m_1, \dots, m_{i-1}, m_i, \dots, m_n$). For the ``right'' challenge, it samples one message from each bucket $B_{i-1}$ and $B_i$. Let $c^0$ and $c^1$ be the ciphertexts corresponding to thee challenge messages. If $h$ agrees on $c^0$ and $c^1$, then this suggests the messages are from the same bucket, and the adversary should guess ``left''. On the other hand, if $h$ disagrees on $c^0$ and $c^1$, then the adversary should guess ``right''. If $h$ distinguishes the buckets $B_{i-1}$ and $B_i$, this adversary does strictly better than random guessing. On the other hand, even if $h$ fails to distinguish the buckets, the adversary does at least as well as random guessing. So overall, it still has a noticeable advantage at the ORE security game.

\trimrm{

We now give the formal proof of Theorem \ref{thm:reid-main}.

\begin{proof}

We construct an example reidentification scheme for $\encthresh$ as follows. The algorithm $\GenTT$ fixes the threshold $t = N/2$ and samples $(\sk^r, \params^r) \getsr \Gen(1^\lambda, 1^\ell)$, yielding a concept $f_{t, r}$. Let $\cD$ be the distribution of $(\params^r, \Enc(\sk^r, m))$ for uniformly random $m \in [N]$. Let $m_1',\dots,m_n'\getsr [N]$, and let $m_1 \le \dots \le m_n$ be the result of sorting the $m_i'$.  Let $m_0 = 0$ and $m_{n+1} = N$. Since $n = \poly(k) \ll N$, these random messages will be well-spaced. In particular, with overwhelming probability, $|m_{i+1} - m_i| > 1$ for every $i$, so we assume this is the case in what follows. $\GenTT$ then sets the samples to be $(x_1 = (\params^r, \Enc(\sk^r, m_1')), \dots, x_n = (\params^r, \Enc(\sk^r, m_n')))$. Let $x_0 = (\params^r, \Enc(\sk^r, m_0))$ be a ``junk'' example.

The algorithm $\TraceTT$ creates buckets $B_i = [m_i, m_{i+1})$. For each $i$, let
\[p_i = \Pr_{m \in B_i, \text{coins of } \Enc}[h(\params^r, \Enc(\sk, m)) = 1].\]
By sampling random choices of $m$ in each bucket, $\TraceTT$ can efficiently compute a good estimate $\hat{p}_i \approx p_i$ for each $i$ (Lemma \ref{lem:prob-est}). It then accuses the least $i$ for which $\hat{p}_{i-1} - \hat{p}_{i} \ge \frac{\gamma}{n}$, and $\bot$ if none is found.

\begin{lemma} \label{lem:prob-est}
Let $K = \frac{8n^2}{\gamma^2}\log(9n/\xi)$. For each $i = 0, \dots, n$, let
\[\hat{p}_i = \frac{1}{K} \sum_{j = 1}^K h(x_j)\]
where $x_j = (\params^r, \Enc(\sk^r, m_j))$ for i.i.d. $m_1, \dots, m_K \getsr B_i$. Then $|\hat{p}_i - p_i| \le \frac{\gamma}{4n}$ for every $i$ with probability at least $1-\xi/4$.
\end{lemma}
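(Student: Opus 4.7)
The plan is a textbook concentration-plus-union-bound argument. For any fixed index $i \in \{0, \dots, n\}$, observe that by definition of $p_i$, when $m$ is drawn uniformly from $B_i$ and $\Enc$ is run with fresh randomness, the random variable $h(\params^r, \Enc(\sk^r, m))$ is Bernoulli with mean exactly $p_i$. Hence $\hat{p}_i$ is the empirical mean of $K$ i.i.d.\ $\{0,1\}$-valued random variables each with mean $p_i$, and Hoeffding's inequality gives
\[\Pr\!\left[|\hat{p}_i - p_i| > \tfrac{\gamma}{4n}\right] \;\le\; 2\exp\!\left(-2K \cdot \tfrac{\gamma^2}{16n^2}\right) \;=\; 2\exp\!\left(-\tfrac{K\gamma^2}{8n^2}\right).\]

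Substituting the chosen value $K = (8n^2/\gamma^2)\log(9n/\xi)$ makes the exponent equal to $-\log(9n/\xi)$, so the single-index failure probability is at most $2\xi/(9n)$. A union bound over all $n+1$ indices $i = 0, 1, \dots, n$ then shows that the event $|\hat{p}_i - p_i| \le \gamma/(4n)$ holds simultaneously for every $i$ with probability at least $1 - (n+1) \cdot 2\xi/(9n) \ge 1 - \xi/4$ (the constants $8$ and $9$ in $K$ are chosen precisely to absorb the factor of $n+1$ in the union bound and leave a $\xi/4$ slack).

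I do not expect any real obstacle here: the only substantive observation is that each $h(x_j)$ really is an i.i.d.\ Bernoulli with the correct mean $p_i$, which follows immediately from the way $\TraceTT$ draws its samples (uniform $m \in B_i$, independent fresh encryption coins, and $h$ deterministic on its input). Everything else is a direct application of Hoeffding and a union bound over $n+1$ terms, with the constants in $K$ chosen to ensure the resulting bound is tight enough.
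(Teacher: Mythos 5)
Your proof is correct and takes exactly the same route as the paper: a Chernoff/Hoeffding bound on each $\hat{p}_i$ gives a per-index failure probability of $2\exp(-K\gamma^2/8n^2)$, and a union bound over the $n+1$ indices finishes it. (The step $(n+1)\cdot 2\xi/(9n) \le \xi/4$ in fact requires $n \ge 8$, but the paper's own stated bound $2\exp(-K\gamma^2/8n^2) \le \xi/4(n+1)$ has the identical restriction, so this is not a discrepancy between your argument and theirs.)
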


\begin{proof}
By a Chernoff bound, the probability that any given $\hat{p}_i$ deviates from $p_i$ by more than $\frac{\gamma}{4n}$ is at most $2\exp(-K\gamma^2/8n^2) \le \frac{\xi}{4(n+1)}$. The lemma follows by a union bound.
\end{proof}

We first verify completeness for this scheme. Let $L$ be a learner for $\encthresh$ using $n$ examples. If the hypothesis $h$ produced by $L$ is $(\frac{1}{2} - \gamma)$-good, then there exists $i_0 < i_1$ such that $p_{i_0} - p_{i_1} \ge 2\gamma$. If this is the case, then there must be an $i$ for which $p_{i-1} - p_{i} \ge \frac{2\gamma}{n}$. Then with probability all but $\xi(n)/2$ over the estimates $\hat{p}_i$, we have $\hat{p}_{i-1} - \hat{p}_{i} \ge \frac{\gamma}{n}$, so some index is accused.

Now we verify soundness. Fix a PPT $L$, and let $j^* \in [n]$. Suppose $L$ violates the soundness of the scheme with respect to $j^*$, i.e.
\[\Pr_{h\getsr L(S_{-j^*}), \text{coins of } \GenTT}[\TraceTT(h) = j^*] > \xi.\]
We will use $L$ to construct an adversary $\cA$ for the ORE scheme that succeeds with noticeable advantage. It suffices to build an adversary for the static (many-challenge) security of ORE, with Theorem~\ref{thm:singlemessage} showing how to convert it to a single-challenge adversary.  This many-challenge adversary is presented as Algorithm \ref{fig:ore-adv}. (While not explicitly stated, the adversary should halt and output a random guess whenever the messages it samples are not well-spaced.)

\begin{algorithm}[h]
\caption{ORE adversary $\cA$}
\label{fig:ore-adv}
\begin{enumerate}
\item Sample $m_1',\dots,m_n'\getsr [N]$, and let $m_1 \le \dots \le m_n$ be the result of sorting the $m_j'$. Let $\pi$ be the permutation on $\{1,\dots,n\}$ such that $m_{\pi(j)}=m'_{j}$.  Let $m_0 = 0$.  Let $i^* = \pi(j^*)$ so that $m_{i^*}=m'_{j^*}$.
\item Construct pairs $(m_L^0, m_L^1)$ and $(m_R^0, m_R^1)$ as follows. Let $B_0 = (m_{i^*-1}, m_{i^*})$ and $B_1 = (m_{i^*}, m_{i^*+1})$. Sample $m_L^0 \le m_L^1$ at random from the same $B_j$, for a random choice of $j \in \{0, 1\}$. Sample $m_R^0 \getsr B_0$ and $m_R^1 \getsr B_1$.
\item Challenge on the pair of sequences $m_0, m_1, \dots, m_{i^*-1}, m_L^1, m_L^2, m_{i^*}, \dots, m_n$ and $m_0, m_1, \dots, m_{i^*-1}, m_R^1, m_R^2, m_{i^*}, \dots, m_n$, receiving ciphertexts $c_1, \dots, c_{i^*}^0, c_{i^*}^1, \dots, c_n.$  For $j\neq j^*$, let $c'_{j}=c_{\pi(j)}$ so that $c'_j$ is an encryption of $m'_j$.
\item Set $t = N/2$ and let 
\begin{align*}
S_{-j^*} &= \big\{(\params^r, c'_1, \chi(m'_1 \le t)), \dots, (\params^r, c'_{j^*-1}, \chi(m'_{j^*-1} \le t)),\\
		 &\hspace{20pt}(\params^r, c_0, 1), (\params^r, c'_{j^*+1}, \chi(m'_{j^*+1} \le t)), \dots, (\params^r, c'_{n}, \chi(m'_{n} \le t))\big\}\\
		 &=\big\{(\params^r, c_{\pi(1)}, \chi(m_{\pi(1)} \le t)), \dots, (\params^r, c_{\pi(j^*-1)}, \chi(m_{\pi(j^*-1)} \le t)),\\
		 &\hspace{20pt}(\params^r, c_0, 1), (\params^r, c_{\pi(j^*+1)}, \chi(m_{\pi(j^*+1)} \le t)), \dots, (\params^r, c_{\pi(n)}, \chi(m_{\pi(n)} \le t))\big\}
\end{align*}
Obtain $h \getsr L(S_{-j^*})$.
\item Guess $b' = 0$ if $h(\params^r, c_{i^*}^0) = h(\params^r, c_{i^*}^1)$. Otherwise guess $b' = 1$.
\end{enumerate}
\end{algorithm}

Let $i^*$ be such that $m_{i^*}=m'_{j^*}$.  With probability at least $\xi$ over the parameters $(\sk^r, \params^r)$, the choice of messages, the choice of the hypothesis $h$, and the coins of $\TraceTT$, there is a gap $\hat{p}_{i^*-1} - \hat{p}_{i^*} \ge \frac{\gamma}{n}$. Hence, by Lemma \ref{lem:prob-est}, there is a gap $p_{i^* - 1} - p_{i^*} \ge \frac{\gamma}{2n}$ with probability at least $\frac{\xi}{2}$.

We now calculate the advantage of the adversary $\cA$. Fix a hypothesis $h$. For notational simplicity, let $p = p_{i^* - 1}$ and let $q = p_{i^*}$. Let $y_0 = h(\params^r, c_{i^*}^0)$ and $y_1 = h(\params^r, c_{i^*}^1)$. Then the adversary's success probability is:

\begin{align*}
\Pr[b' = b] &= \frac{1}{2}(\Pr[y_0 = y_1 | b = 0 ]+ \Pr[y_0 \ne y_1| b = 1]) \\
         &= \frac{1}{2}(\frac{1}{2}(p^2 + (1-p)^2 + q^2 + (1-q)^2) + (1 - pq - (1-p)(1-q))) \\
         &= \frac{1}{2}+ \frac{1}{2} (p-q)^2. \\
\end{align*}

Thus if $p - q \ge \frac{\gamma}{2n}$, then the adversary's advantage is at least $\frac{\gamma^2}{4n^2}$. On the other hand, even for arbitrary values of $p, q$, the advantage is still nonnegative. Therefore, the advantage of the strategy is at least $\frac{\xi \gamma^2}{8n^2} - \negl(k)$ (the $\negl(k)$ term coming from the assumption that the $m_i'$ sampled where distinct), which is a noticeable function of the parameter $k$. This contradicts the static security of the ORE scheme.

\end{proof}

\subsection{The SQ Learnability of $\encthresh$} \label{sec:sq}

The statistical query (SQ) model is a natural restriction of the PAC model by which a learner is able to measure statistical properties of its examples, but cannot see the individual examples themselves. We recall the definition of an SQ learner.

\begin{definition}[SQ learning \cite{Kearns98}]
Let $c : X \to \{0, 1\}$ be a target concept and let $\cD$ be a distribution over $X$. In the SQ model, a learner is given access to a \emph{statistical query oracle} $\stat(c, \cD)$. It may make queries to this oracle of the form $(\psi, \tau)$, where $\psi: X \times \{0, 1\} \to \{0, 1\}$ is a query function and $\tau \in (0, 1)$ is an error tolerance. The oracle $\stat(c, \cD)$ responds with a value $v$ such that $|v - \Pr_{x \in \cD}[\psi(x, c(x)) = 1]| \le \tau$. The goal of a learner is to produce, with probability at least $1-\beta$, a hypothesis $h : X \to \{0, 1\}$ such that $\error_{\cD}(c, h) \le \alpha$. The query functions must be efficiently evaluable, and the tolerance $\tau$ must be lower bounded by an inverse polynomial in $k$ and $1/\alpha$.

The \emph{query complexity} of a learner is the worst-case number of queries it issues to the statistical query oracle. An SQ learner is efficient if it also runs in time polynomial in $k, 1/\alpha, 1/\beta$.
\end{definition}

Feldman and Kanade \cite{FeldmanKa12} investigated the relationship between query complexity and computational complexity for SQ learners. They exhibited a concept class $\cC$ which is efficiently PAC learnable and SQ learnable with polynomially many queries, but assuming $\mathbf{NP} \neq \mathbf{RP}$, is not efficiently SQ learnable. Concepts in this concept class take the form
\[g_{\phi, y}(x, x') = \begin{cases}
\operatorname{PAR}_y(x') & \text{ if } x = \phi \\
0 & \text{ otherwise.}
\end{cases}\]
Here, $\operatorname{PAR}_y(x')$ is the inner product of $y$ and $x'$ modulo $2$. The concept class $\cC$ consists of $g_{\phi, y}$ where $\phi$ is a satisfiable 3-CNF formula and $y$ is the lexicographically first satisfying assignment to $\phi$. The efficient PAC learner for parities based on Gaussian elimination shows that $\cC$ is also efficiently PAC learnable. It is also (inefficiently) SQ learnable with polynomially many queries: either the all-zeroes hypothesis is good, or an SQ learner can recover the formula $\phi$ bit-by-bit and determine the satisfying assignment $y$ by brute force. On the other hand, because parities are information-theoretically hard to SQ learn, the satisfying assignment $y$ remains hidden to an SQ learner unless it is able to solve 3-SAT.

In this section, we show that the concept class $\encthresh$ shares these properties with $\cC$. Namely, we know that $\encthresh$ is efficiently PAC learnable and because it is not efficiently privately learnable, it is not efficiently SQ learnable \cite{BlumDwMcNi05}. We can also show that $\encthresh$ has an SQ learner with polynomial query complexity. Making this observation about $\encthresh$ is of interest because the hardness of SQ learning $\encthresh$ does not seem to be related to the (information-theoretic) hardness of SQ learning parities.

\begin{proposition}
The concept class $\encthresh$ is (inefficiently) SQ learnable with polynomially many queries.
\end{proposition}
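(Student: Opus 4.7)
The plan is to exhibit a (computationally inefficient) SQ learner that mimics the PAC learner $L$ from the previous subsection, but that extracts information via statistical queries instead of samples. The learner proceeds in three stages and uses $\poly(k, 1/\alpha)$ queries total: (i) decide whether the all-zero hypothesis already suffices; (ii) if not, recover the ORE public parameters $\params^r$ bit-by-bit; and (iii) binary-search for a threshold that yields a hypothesis equivalent to the target.

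For stage (i), I would query the expectation $p_0 := \Pr[b=1]$ with tolerance $\alpha/8$; if the estimate is below $\alpha/2$, output the all-zero hypothesis, which has error at most $\alpha$. Otherwise $p_0 \ge 3\alpha/8$, and because every positive example of $f_{t,r}$ has public parameters $\params^r$, the query $\psi_j((\params, c), b) = b \wedge (\params)_j$ has expectation exactly $p_0 \cdot (\params^r)_j \in \{0, p_0\}$. The two cases are distinguishable with tolerance $\alpha/8$, so issuing one such query per bit of $\params^r$ recovers $\params^r$ using $\poly(k)$ queries.

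The heart of the argument is stage (iii). Using the learner's unbounded running time, I would enumerate randomness strings and take any $r^*$ with $\Gen(1^\lambda, 1^\ell; r^*) = (\sk^{r^*}, \params^r)$; the true randomness $r$ witnesses that such an $r^*$ exists. The key claim is that for any such $r^*$ there exists $t^* \in [N+1]$ with $f_{t^*, r^*} \equiv f_{t, r}$ as Boolean functions. To prove this, let $D_{r'} = \{c : \Dec(\sk^{r'}, c) \ne \bot\}$. Strong correctness of $\Comp$ gives $\Comp(\params^r, c_0, c_1) = \Comp_{ciph}(\sk^r, c_0, c_1) = \Comp_{ciph}(\sk^{r^*}, c_0, c_1)$ for all $c_0, c_1$. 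Specializing to $c_0 = c_1$ shows $D_r = D_{r^*} =: D$, since both sides equal $\bot$ exactly when decryption fails on $c_0$; then for general $c_0, c_1 \in D$, the same identity shows $\Dec(\sk^r, \cdot)$ and $\Dec(\sk^{r^*}, \cdot)$ induce the same linear pre-order on $D$. Hence the set $\{c \in D : \Dec(\sk^r, c) < t\}$ is a downset in this common order, and can be realized as $\{c \in D : \Dec(\sk^{r^*}, c) < t^*\}$ by taking $t^* := 1 + \max\{\Dec(\sk^{r^*}, c) : c \in D, \Dec(\sk^r, c) < t\}$ (or $0$ if the set is empty). On $\{\params \ne \params^r\}$ and on $\{\params=\params^r\}\setminus D$, both concepts vanish, so $f_{t^*, r^*}$ and $f_{t, r}$ agree pointwise.

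With the lemma in hand, I would locate such a $t^*$ by binary search on $[N+1]$ using the statistical query $\psi_t((\params, c), b) = f_{t, r^*}(\params, c)$, which is efficiently evaluable given $r^*$. Writing $F(t) := \Pr[f_{t, r^*}(x) = 1]$, the family $\{f_{t, r^*}\}_t$ is a nested chain, so $\Pr[f_{t, r^*} \ne f_{t', r^*}] = |F(t) - F(t')|$; any $t^\dagger$ with $|F(t^\dagger) - p_0| \le \alpha$ thus yields a hypothesis with error at most $\alpha$, and standard binary search locates such a $t^\dagger$ in $O(\log N) = \poly(k)$ queries. \emph{The main obstacle is conceptual:} the semantic security of the ORE scheme appears to block any SQ recovery of $\sk^r$ itself, so it is essential that the learner does \emph{not} need to find the ``true'' randomness. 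Strong correctness of $\Comp$ is precisely the tool that sidesteps this, guaranteeing that any $r^*$ with the same $\params^r$ gives rise to the same concept class up to relabeling of thresholds, so that brute-force recovery of $\params^r$ alone suffices.
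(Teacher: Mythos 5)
Your proposal is correct and follows the paper's three-stage outline: an all-zeros test, bit-by-bit recovery of $\params^r$ via statistical queries, and brute-force key recovery followed by binary search. Where you diverge is in the structural consequence of strong correctness that justifies the brute-force step. The paper proves a stronger statement (its lemma on recovered secret keys): whenever $(\sk_1,\params)$ and $(\sk_2,\params)$ both lie in the range of $\Gen$, the maps $\Dec(\sk_1,\cdot)$ and $\Dec(\sk_2,\cdot)$ are \emph{identical}, so any recovered key reproduces exactly the same concept at the \emph{same} threshold, and the binary search can target $t$ directly. You instead derive the weaker fact that the two keys decrypt the same set $D$ of ciphertexts and induce the same total preorder on $D$, and then recover the target concept only up to a relabeling of the threshold: the downset $\{c\in D:\Dec(\sk^r,c)<t\}$ is realized as $\{c\in D:\Dec(\sk^{r^*},c)<t^*\}$ for an appropriately chosen $t^*$. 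Both routes are valid, and the downstream SQ binary search is unaffected. Your lemma is slightly lighter-weight---it uses only strong comparison correctness, whereas the paper's proof of its lemma also invokes decryption correctness (to produce the witness ciphertext $c'\in\Enc(\sk_1,p_2)$)---at the small cost of the extra relabeling step. One cosmetic slip: when the set $\{c\in D:\Dec(\sk^r,c)<t\}$ is empty, set $t^*:=1$ rather than $0$ so that $t^*\in[N+1]$; the concept $f_{1,r^*}$ is still identically zero since plaintexts lie in $\{1,\dots,N\}$.
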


As with $\cC$ there are two cases. In the first case, the target distribution places nearly zero weight on examples with $\params = \params^r$, and so the all-zeroes hypothesis is good. In the second case, the target distribution places noticeable weight on these examples, and our learner can use statistical queries to recover the comparison parameters $\params^r$ bit-by-bit. Once the public parameters are recovered, our learner can determine a corresponding secret key by brute force. Lemma \ref{lem:sk-recovery} below shows that any corresponding secret key -- even one that is not actually $\sk^r$ -- suffices. The learner can then use binary search to determine the threshold value $t$.

\begin{proof}
Let $f_{t, r}$ be the target concept, $\cD$ be the target distribution, and $\alpha$ be the target error rate. With the statistical query $(x \times b \mapsto b, \alpha/4)$, we can determine whether the all-zeroes hypothesis is accurate. That is, if we receive a value that is less than $\alpha/2$, then $\Pr_{x \in \cD}[f_{t, r}(x) = 1] \le \alpha$. If not, then we know that $\Pr_{x \in \cD}[f_{t, r}(x) = 1] \ge \alpha/4$, so $\cD$ places significant weight on examples prefixed with $\params^r$. Suppose now that we are in the latter case.

Let $m = |\params|$. For $i = 1, \dots, m$, define $\psi_i(\params, c, b) = 1$ if $\params_i = 1$ and $b = 1$, and $\psi_i(\params, c, b) = 0$ otherwise. Then by asking the queries $(\psi_i, \alpha/16)$, we can determine each bit $\params^r_i$ of $\params^r$.

Now by brute force search, we determine a secret key $\sk$ for which $(\sk, \params^r) \in \operatorname{Range}(\Gen)$. The recovered secret key $\sk$ may not necessarily be the same as $\sk^r$. However, the following lemma shows that $\sk$ and $\sk^r$ are functionally equivalent:

\begin{lemma} \label{lem:sk-recovery}
Suppose $(\Gen, \Enc, \Dec, \Comp)$ is a strongly correct ORE scheme. Then for any pair $(\sk_1, \params), (\sk_2, \params) \in \operatorname{Range}(\Gen)$, we have that $\Dec_{\sk_1}(c) = \Dec_{\sk_2}(c)$ for all ciphertexts $c$.
\end{lemma}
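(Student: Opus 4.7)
My plan is to leverage strong correctness of comparison to tie together the two decryption functions $D_1 \equiv \Dec(\sk_1,\cdot)$ and $D_2 \equiv \Dec(\sk_2,\cdot)$. The key observation is that $\Comp(\params,c,c')$ is a publicly computable value that, by strong correctness, equals $\Comp_{ciph}(\sk_i,c,c')$ for \emph{both} $i=1$ and $i=2$ simultaneously. Hence the two decryption functions must induce the same comparison outcome on every pair of ciphertexts. The lemma then falls out by choosing the comparison ciphertexts carefully.

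The first step is to show that $D_1(c)=\bot$ if and only if $D_2(c)=\bot$. Suppose $D_1(c)=\bot$, and fix any plaintext $m$; let $c'=\Enc(\sk_2,m)$, so that $D_2(c')=m\ne\bot$ by correct decryption. Strong correctness under $\sk_1$ gives $\Comp(\params,c,c')=\Comp_{ciph}(\sk_1,c,c')=\bot$, while strong correctness under $\sk_2$ gives $\Comp(\params,c,c')=\Comp_{ciph}(\sk_2,c,c')$. Since $D_2(c')\ne\bot$, the only way the latter can equal $\bot$ is if $D_2(c)=\bot$. Symmetry gives the converse.

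The second step handles the remaining case where $m_i:=D_i(c)$ is non-$\bot$ for both $i$. Assume for contradiction, and WLOG by symmetry, that $m_1<m_2$. For each $i\in\{m_1+1,\dots,N\}$, let $c_i=\Enc(\sk_1,i)$, so that $D_1(c_i)=i$ by correct decryption. Strong correctness applied to $(\sk_1,c,c_i)$ yields $\Comp(\params,c,c_i)=\;$``$<$'', and then reading the same public value through $\sk_2$ forces $D_2(c_i)\ne\bot$ and $D_2(c_i)>m_2$. A parallel argument applied to each pair $(c_i,c_j)$ with $i\ne j$ shows $D_2(c_i)\ne D_2(c_j)$. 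We have thus produced $N-m_1$ distinct values $D_2(c_i)$ inside the set $\{m_2+1,\dots,N\}$ of size $N-m_2$, which forces $N-m_1\le N-m_2$, i.e., $m_1\ge m_2$, a contradiction.

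The main obstacle is conceptual rather than technical: strong correctness is phrased in terms of $\Comp_{ciph}$, which itself short-circuits to $\bot$ whenever \emph{either} argument fails to decrypt, so one must ensure that the comparisons driving the pigeonhole argument involve ciphertexts that genuinely decrypt under \emph{both} keys. The trick is to compare $c$ only against freshly generated encryptions of known plaintexts (under $\sk_1$): correct decryption guarantees the $\sk_1$-side, and the $\sk_2$-side is then forced to be well-defined and ordered by strong correctness, which is exactly what drives the counting argument.
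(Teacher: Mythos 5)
Your proof is correct and follows essentially the same route as the paper: strong correctness lets you read $\Comp(\params,\cdot,\cdot)$ through both $\sk_1$ and $\sk_2$, forcing the two decryption functions to induce identical comparison semantics, and finiteness of $[N]$ then rules out any discrepancy. The only difference is organizational---the paper iterates an ascending chain $p_1<p_2<\cdots$ of plaintexts until $[N]$ is exhausted while you run all comparisons at once and close with a pigeonhole count, and in the $\bot$-case you encrypt an arbitrary message under $\sk_2$ rather than a message $p'>\Dec(\sk_1,c)$ under $\sk_1$, which as a small bonus sidesteps the corner case $\Dec(\sk_1,c)=N$ that the paper's phrasing technically leaves open.
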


With the secret key $\sk$ in hand, we now conduct a binary search for the threshold $t$. Recall that we have an estimate $v$ for the weight that $f_{t, r}$ places on positive examples, i.e. $|v - \Pr_{x \in \cD}[f_{t, r}(x) = 1]| \le \alpha/4$. Starting at $t_1 = N/2$, we issue the query $(\varphi_1, \alpha/4)$ where $\varphi_1(\params, c, b) = 1$ iff $\params = \params^r$ and $\Dec(\sk, c) < t$. Let $h_{t_1}$ denote the hypothesis
\[h_{t_1}(\params, c) = \begin{cases}
1 & \text{if } (\params = \params^r) \land (\Dec(\sk, c) \ne \bot) \land (\Dec(\sk, c) < t_1) \\
0 & \text{otherwise.}
\end{cases}\]
Thus, the query $(\varphi_1, \alpha/4)$ approximates the weight $h_{t_1}$ places on positive examples.  Let the answer to this query be $v_1$. If $|v_1 - v| \le \alpha / 2$, then we can halt and output the good hypothesis $h_{t_1}$. Otherwise, if $v_1 < v - \alpha/2$, we set the next threshold to $t_2 = 3N/4$, and if $v_1 > v + \alpha/2$, we set the next threshold to $t_2 = N/4$. We recurse up to $\log N = \ell = \poly(k)$ times, yielding a good hypothesis for $f_{t, r}$.
\end{proof}

\begin{proof}[Proof of Lemma \ref{lem:sk-recovery}]
Suppose the lemma is not true. First suppose that there exists a ciphertext $c$ such that $\Dec(\sk_1, c) = p_1 < p_2 = \Dec(\sk_2, c)$. Let $c' \in \Enc(\sk_1, p_2)$. Then by strong correctness applied to the parameters $(\sk_1, \params)$, we must have $\Comp(\params, c, c') =$ ``$<$''. Now by strong correctness applied to $(\sk_2, \params)$, we must have $\Dec(\sk_2, c') > p_2$. Thus, $p_1 < \Dec(\sk_1, c') = p_2 < \Dec(\sk_2, c')$. Repeating this argument, we obtain a contradiction because the message space is finite.

Now suppose instead that there is a ciphertext $c$ for which $\Dec(\sk_1, c) = p \in [N]$, but $\Dec(\sk_2, c) = \bot$. Let $c' \in \Enc(\sk_1, p')$ for some $p' > p$. Then $\Comp(\params, c, c') =$ ``$<$'' by strong correctness applied to $(\params, \sk_1)$. But $\Comp(\params, c, c') =$ ``$\bot$'' by strong correctness applied to $(\params, \sk_2)$, again yielding a contradiction. 
\end{proof}

}

\trimrm{

\section{ORE with Strong Correctness}

\label{sec:construction}

We now explain how to obtain ORE with strongly correct comparison, as all prior ORE schemes only satisfy the weaker notion of correctness.  The lack of strong correctness is easiest to see with the scheme of Boneh et al.~\cite{BLRSZZ15}.  The protocol is built from current multilinear map constructions, which are noisy.  If the noise terms grow too large, the correctness of the multilinear map is not guaranteed.  The comparison function in~\cite{BLRSZZ15} is computed by performing multilinear operations, and for correctly generated ciphertexts, the operations will give the right answer.  However, there exist ciphertexts, namely those with very large noise, for which the comparison function gives an incorrect output.  The result is that the comparison operation is not guaranteed to be consistent with decrypting the ciphertexts and comparing the plaintexts. 

As described in the introduction, we give a generic conversion from any ORE scheme with weakly correct comparison into a strongly correct scheme.  We simply modify the encryption algorithm by adding a non-interactive zero-knowledge (NIZK) proof that the resulting ciphertext is well-formed.  Then the decryption and comparison procedures check the proof(s), and only output a non-$\bot$ result (either decryption or comparison) if the proof(s) are valid. 

\paragraph{Instantiating our scheme.}  In our construction, we need the (weak) correctness of the underlying ORE scheme to hold with probability one.  However, the existing protocols only have correctness with overwhelming probability, so some minor adjustments need to be made to the protocols.  This is easiest to see in the ORE scheme of Boneh et al.~\cite{BLRSZZ15}.  The Boneh et al. scheme uses noisy multilinear maps~\cite{GGH13} which may introduce errors.  Therefore, the protocol described in~\cite{BLRSZZ15} only achieves the (weak) correctness property with overwhelming probability, whereas we will require (weak) correctness with probability 1 for the conversion.  However, it is straightforward to generate the parameters for the protocol in such a way as to completely eliminate errors.  Essentially, the parameters in the protocol have an error term that is generated by a (discrete) Gaussian distribution, which has unbounded support.   Instead, we truncate the Gaussian, resulting in a noise distribution with bounded support.  By truncating sufficiently far from the center, the resulting distribution is also statistically close to the full Gaussian, so security of the protocol with truncated noise follows from the security of the protocol with un-truncated noise.  By truncating the noise distribution, it is straightforward to set parameters so that no errors can occur.  

It is similarly straightforward to modify current obfuscation candidates, which are also built from multilinear maps, to obtain perfect (weak) correctness by truncating the noise distributions.  Thus, our scheme has instantiations using multilinear maps or iO.

\subsection{Conversion from Weakly Correct ORE} 

We describe our generic conversion from an order-revaling encryption scheme with weak correctness using NIZKs.  We will need the following additional tools:

\paragraph{Perfectly binding commitments.} A perfectly binding commitment $\Com$ is a randomized algorithm with two properties.  The first is perfect binding, which states that if $\Com(m;r)=\Com(m';r')$, then $m=m'$.  The second requirement is computational hiding, which states that the distributions $\Com(m)$ and $\Com(m')$ are computationally indistinguishable for any messages $m,m'$.  Such commitments can be built, say, from any injective one-way function.

\paragraph{Perfectly sound NIZK.}  A NIZK protocol consists of three algorithms:

\begin{itemize}
	\item $\Setup(1^\lambda)$ is a randomized algorithm that outputs a common reference string $\crs$.
	\item $\Prove(\crs,x,w)$ takes as input a common reference string $\crs$, an NP statement $x$, and a witness $w$, and produces a proof $\pi$.
	\item $\Ver(\crs,x,\pi)$ takes as input a common reference string $\crs$, statement $x$, and a proof $\pi$, and outputs either $\accept$ or $\reject$.
\end{itemize}

We make three requirements for a NIZK:
\begin{itemize}
\item {\bf Perfect Completeness.}  For all security parameters $\lambda$ and any true statement $x$ with witness $w$, \[\Pr[\Ver(\crs,x,\pi)=\accept:\crs\gets\Setup(1^\lambda);\pi\gets\Prove(\crs,x,w)]=1.\]
\item {\bf Perfect Soundness.}  For all security parameters $\lambda$, any \emph{false} statement $x$ and any (invalid) proof $\pi$, \[\Pr[\Ver(\crs,x,\pi)=\accept:\crs\gets\Setup(1^\lambda)]=0.\]
\item {\bf Computational Zero Knowledge.} There exists a simulator $\cS_1,\cS_2$ such that for any computationally bounded adversary $\cA$, the quantity

\[\|\Pr[\cA^{\Prove(\crs,\cdot,\cdot)}(\crs)=1:\crs\gets\Setup(1^\lambda)]-\Pr[\cA^{Sim(\crs,\tau,\cdot,\cdot)}(\crs)=1:(\crs,\tau)\gets \cS_1(1^\lambda)]\|\]

is negligible, where $Sim(\crs,\tau,x,w)$ outputs $\cS_2(\crs,\tau,x)$ if $w$ is a valid witness for $x$, and $Sim(\crs,\tau,x,w)=\bot$ if $w$ is invalid.
\end{itemize}

NIZKs satisfying these requirements can be built from bilinear maps~\cite{GOS12}.

\medskip

\subsubsection{The Construction}

We now give our conversion.  Let $(\Setup,\Prove,\Ver)$ be a perfectly sound NIZK and $(\Gen',\Enc',\Dec',\Comp')$ and ORE with \emph{weakly} correct comparison.  We will assume that $\Enc'$ is deterministic; if not, we can derandomize $\Enc'$ using a pseudorandom function.  Let $\Com$ be a perfectly binding commitment.   We construct a new ORE scheme $(\Gen,\Enc,\Dec,\Comp)$ with \emph{strongly} correct comparison:

\begin{itemize}
	\item $\Gen(1^\lambda,1^\ell)$: run $(\sk',\params')\gets\Gen'(1^\lambda,1^\ell)$.  Let $\sigma=\Com(\sk;r)$ for randomness $r$, and run $\crs\gets\Setup(1^\lambda)$.  Then the secret key is $\sk=(\sk',r,\crs)$ and the public parameters are $\params=(\params',\sigma,\crs)$.
	\item $\Enc(\sk,m)$: Compute $c'=\Enc'(\sk',m)$.  Let $x_{c'}$ be the statement $\exists \hat{m},\hat{\sk}',\hat{r}:\sigma=\Com(\hat{\sk}',\hat{r})\wedge c'=\Enc'(\hat{\sk}',\hat{m})$.  Run $\pi_{c'}=\Prove(\crs,x_{c'},\;(m,\sk',r)\;)$.  Output the ciphertext $c=(c',\pi_{c'})$.
	\item $\Dec(\sk,c)$: Write $c=(c',\pi_{c'})$.  If $\Ver(\crs, x_{c'},\pi_{c'})=\reject$, output $\bot$.  Otherwise, output $m=\Dec'(\sk',c')$.
	\item $\Comp(\params,c_0,c_1)$; white $c_b=(c_b',\pi_{c_b'})$ and $\params=(\params',\sigma,\crs)$. If $\Ver(\crs,x_{c_b'},\pi_{c_b'})=\reject$ for either $b=0,1$, then output $\bot$.  Otherwise, output $\Comp'(\params',c_0',c_1')$. 
\end{itemize}

\paragraph{Correctness.}  Notice that, for each plaintext $m$, the ciphertext component $c'=\Enc'(\sk',m)$ is the \emph{unique} value such that $\Dec(\sk,(c',\pi))=m$ for some proof $\pi$.  Moreover, the completeness of the zero knowledge proof implies that $\Enc(\sk,m)$ outputs a valid proof.  Decryption correctness follows.

For strong comparison correctness, consider two ciphertexts $c_0,c_1$ where $c_b=(c_b',\pi_{c_b'})$.  Suppose both proofs $\pi_{c_b'}$ are valid, which means that verification passes when running $\Comp$ and so $\Comp(\params,c_0,c_1)=\Comp'(\params',c_0',c_1')$.  Verification also passes when decrypting $c_b$, and so $\Dec(\sk,c_b)=\Dec'(\sk',c_b')$.

Since the proofs are valid, $c_b'=\Enc'(\sk',m_b)$ for some $m_b$ for both $b=0,1$.  The weak correctness of comparison for $(\Gen',\Enc',\Dec',\Comp')$ implies that $\Comp'(\params',c_0',c_1')=\Comp_{plain}(m_0,m_1)$.  The decryption correctness of $(\Gen',\Enc',\Dec',\Comp')$ then implies that $\Dec(\sk',c_b')=m_b$, and therefore $\Dec(\sk,c_b)=m_b$.  Thus $\Comp_{ciph}(\sk,c_0,c_1)=\Comp_{plain}(m_0,m_1)$.  Putting it all together, $\Comp(\params,c_0,c_1)=\Comp_{ciph}(\sk,c_0,c_1)$, as desired. 

Now suppose one of the proofs $\pi_{c_b'}$ are invalid.  Then $\Comp(\params,c_0,c_1)=\bot$ and $\Dec(\sk,c_b)=\bot$.  This means $\Comp_{ciph}(\sk,c_0,c_1)=\bot=\Comp(\params,c_0,c_1)$, as desired.

\paragraph{Security.}  To prove security, we first use the zero-knowledge simulator to simulate the proofs $\pi_c'$ without using a witness (namely, the secret decryption key).  Then we use the hiding property of the commitment to replace $\sigma$ with a commitment to 0.  At this point, the entire game can be simulated using an $\Enc'$ oracle, and so the security reduces to the security of $\Enc'$.  

\begin{theorem}
	If $(\Gen',\Enc',\Dec',\Comp')$ is a (statically) secure ORE, $(\Setup,\Prove,\Ver)$ is computationally zero knowledge, and $\Com$ is computationally hiding, then $(\Gen,\Enc,\Dec,\Comp)$ is a statically secure ORE.
\end{theorem}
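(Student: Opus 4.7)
The plan is a three-hop hybrid argument that peels away the two components added on top of the weakly correct ORE — the NIZK proofs and the commitment — and then reduces the remaining gap to the static security of $(\Gen',\Enc',\Dec',\Comp')$. Write $H_0^b$ for the real static-security game with challenge bit $b$: the challenger runs $\Gen$ honestly, producing $(\sk',\params')$, a commitment $\sigma=\Com(\sk';r)$, and a real $\crs$, then for each queried $m_i^{(b)}$ outputs $(c_i',\pi_{c_i'})$ with $c_i'=\Enc'(\sk',m_i^{(b)})$ and $\pi_{c_i'}\gets\Prove(\crs,x_{c_i'},(m_i^{(b)},\sk',r))$.

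First I would move to $H_1^b$, in which the CRS is generated as $(\crs,\tau)\gets\cS_1(1^\lambda)$ and each proof is replaced by $\cS_2(\crs,\tau,x_{c_i'})$. Since $\sigma$ still commits to $\sk'$, each $x_{c_i'}$ is true and $(m_i^{(b)},\sk',r)$ is a valid witness, so computational zero knowledge yields $|\Pr[H_0^b=1]-\Pr[H_1^b=1]|=\negl(\lambda)$ via a reduction that invokes the ZK oracle once per challenge ciphertext. Next I would move to $H_2^b$, identical to $H_1^b$ except that $\sigma$ is replaced by $\Com(0;r)$. The randomness $r$ is no longer used elsewhere (the simulated proofs require no witness), so this hop is justified by the computational hiding of $\Com$.

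In $H_2^b$ the challenger's only use of $\sk'$ is through the calls $c_i'=\Enc'(\sk',m_i^{(b)})$. A reduction to the static security of $(\Gen',\Enc',\Dec',\Comp')$ therefore samples $(\crs,\tau)$ and $\sigma=\Com(0;r)$ on its own, forwards the two challenge sequences to its ORE challenger to obtain $c_1',\dots,c_q'$, attaches simulated proofs, and passes $(\params',\sigma,\crs)$ together with the augmented ciphertexts to $\cA$; this gives $|\Pr[H_2^0=1]-\Pr[H_2^1=1]|=\negl(\lambda)$, and chaining the three bounds (in both directions for $b=0,1$) yields static security of the new scheme.

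The subtle point to keep in mind is that after the hop to $H_2^b$ the statements $x_{c_i'}$ become \emph{false}: perfect binding of $\Com$ forces any opening of $\sigma$ to be $0$, which cannot serve as a key producing $c_i'$. This would contradict perfect soundness under an honestly generated CRS, but the simulated CRS produced by $\cS_1$ need not satisfy perfect soundness, so the simulator legally outputs accepting proofs for these false statements. Perfect soundness of the NIZK is only invoked elsewhere (to justify strong comparison correctness of the construction), not inside this reduction, so the hybrid chain remains consistent; this interplay between perfect soundness for correctness and only computational zero knowledge for security is the main conceptual point to get right.
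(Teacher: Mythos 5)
Your proposal matches the paper's proof essentially step for step: the same three-hybrid chain (simulate the CRS and proofs via ZK, then swap the commitment to $0$ via hiding, then reduce the residual game to the static security of the base ORE). Your closing remark about the statements becoming false in the final hybrid while the simulated CRS need not be perfectly sound is a correct and worthwhile clarification, though the paper leaves it implicit.
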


\begin{proof} We will prove security through a sequence of hybrids.  Let $\cA$ be an adversary with advantage $\epsilon$ in breaking the static security of $(\Gen,\Enc,\Dec,\Comp)$.

	\paragraph{Hybrid 0.} This is the real experiment, where $\sigma\gets\Com(\sk)$, $\crs\gets\Setup(1^\lambda)$, and the proofs $\pi_{c'}$ are answered using $\Prove$ and valid witnesses.  $\cA$ has advantage $\epsilon$ in distinguishing the left and right ciphertexts.
	
	\paragraph{Hybrid 1.} This is the same as {\bf Hybrid 0}, except that $\crs$ is generated as $(\crs,\tau)\gets\cS_1(1^\lambda)$, and all proofs are generated using $\cS_2(\crs,\tau,\cdot)$.  The zero knowledge property of $(\Setup,\Prove,\Ver)$ shows that this is indistinguishable from {\bf Hybrid 0}.
	
	\paragraph{Hybrid 2.} This is the same as {\bf Hybrid 1}, except that $\sigma\gets\Com(0)$.  Since the randomness for computing $\sigma$ is not needed for simulation, this change is undetectable using the hiding of $\Com$.  
	
	\medskip
	
	Thus the advantage of $\cA$ in {\bf Hybrid 2} is at least $\epsilon-\negl$ for some negligible function $\negl$.  Now consider the following adversary $cB$ that attempts to break the security of $(\Gen',\Enc',\Dec',\Comp')$.  $\cB$ simulates $\cA$, and forwards the message sequences $m_1^{(L)}<m_2^{(L)}<\dots<m_q^{(L)}$ and $m_1^{(R)}<m_2^{(R)}<\dots<m_q^{(R)}$ produced by $\cA$ to its own challenger.  In response, it receives $\params'$, and ciphertexts $c_i'$, where $c_i'$ encrypts either $m_i^{(L)}$ if $b=0$ or $m_i^{(R)}$ if $b=1$, for a random bit $b$ chosen by the challenger.  
	
	$\cB$ now generates $\sigma\gets\Com(0)$ and $(\crs,\tau)\gets\cS_1(1^\lambda)$, and lets $\params=(\params',\sigma,\crs)$.  It also computes $\pi_{c_i'}\gets\cS_2(\crs,\tau,x_{c_i'})$, and defines $c_i=(c_i',\pi_{c_i'})$, and gives $\params$ and the $c_i$ to $\cA$.  Finally when $\cA$ outputs a guess $b'$ for $b$, $\cB$ outputs the same guess $b'$.
	
	We see that the view of $\cA$ as a subroutine of $\cB$ is exactly the same view as in {\bf Hybrid 2}.  Thus, $b'=b$ with probability at least $\epsilon-\negl$.  The security of $(\Gen',\Enc',\Dec',\Comp')$ implies that this quantity, and hence $\epsilon$, must be negligible.  Thus $\cA$ must have negligible advantage in breaking the security of $(\Gen,\Enc,\Dec,\Comp)$.
	
\end{proof}

\section{A Separation for Representation Learning}

\label{sec:proper}
In this section, we show how to construct a concept class $\sig$ that separates efficient \emph{representation} learning from efficient private representation learning, assuming only the existence of one-way functions. Here by ``representation learning'' we mean a restricted form of proper learning where a learner must output a particular representation (i.e. encoding) of a hypothesis $h$ in the concept class $\cC$. As with proper learning, this is a natural syntactic restriction to place on a learner: for instance, if one wants to learn linear threshold functions (LTF), it makes sense to require a learner to produce the actual coefficients of an LTF, rather than an arbitrary circuit that happens to compute an LTF.

The construction is based on the following elegant idea due to Kobbi Nissim \cite{Nissim14}. Suppose $H: D \to R$ is a cryptographic hash function with the property that given $x_1, \dots, x_n$ with $y = H(x_1) = \dots = H(x_n)$, it is infeasible for an efficient adversary to find another $x$ for which $H(x) = y$. Consider the concept class $\hashpoint$ consisting of the concepts
\[f_x(x') = \begin{cases}
1 & \text{ if } H(x) = H(x') \\
0 & \text{ otherwise.}
\end{cases}\]
for every $x \in R$. The representation of a concept $f_x$ is the point $x$. The concept class $\hashpoint$ is very easy to learn (by representation) without privacy: a learner can identify any positive example $x_i$ and output the representation $x_i$. Since $H(x_i) = H(x)$, the concept $f_{x_i}$ is actually equal to the target concept $f_x$. On the other hand, a learner that identifies an index $x^*$ for which $f_{x^*} = f_x$ cannot be differentially private, since the security of the hash function means it is infeasible to produce such an $x^*$ that is not present in the sample.

Note that this argument breaks down if one tries to show that $\hashpoint$ is not privately properly learnable. While it is infeasible to privately produce a representation $x^*$ for which $f_{x^*}$ is a good hypothesis, the hypothesis $h(x) = \chi(H(x) = h(x_i))$ is equal as a function to every good $f_{x^*}$. Moreover, this hypothesis can be constructed privately as long as the sample contains sufficiently many positive examples.


We make this discussion formal by constructing a concept class $\sig$ based on \emph{super-secure digital signature schemes}, which can be constructed from one-way functions. Our use of signatures to derive hardness results for private proper learning is very analogous to prior hardness results for synthetic data generation \cite{DNRRV09, UllmanVa11}.

\begin{definition}
A \emph{digital signature scheme} is a triple of algorithms $(\Gen, \Sign, \Ver)$ where
\begin{itemize}
\item $\Gen(1^\lambda)$ produces a key pair $(\sk, \vk)$.
\item $\Sign(\sk, m)$ takes the private signing key $\sk$ and a message $m \in \{0, 1\}^*$ and produces a signature $\sigma$ for the message $m$.
\item $\Ver(\vk, m, \sigma)$ takes the public verification key $\vk$, a message $m$, and a signature $\sigma$, and (deterministically) outputs a bit indicating whether $\sigma$ is a valid signature for $m$.
\end{itemize}
The correctness property of a digital signature scheme is that for every $(\sk, \vk) \getsr \Gen(1^\lambda)$, every message $m \in \{0,1\}^*$, and every signature $\sigma \getsr \Sign(\sk, m)$, we have $\Ver(\vk, m, \sigma) = 1$.
\end{definition}

\begin{definition}
A digital signature scheme is \emph{super-secure under adaptive chosen-plaintext attacks} if all efficient adversaries $\cA$ win the following weak forgery game with negligible probability:
\begin{itemize}
\item The challenger samples $(\sk, \vk) \getsr \Gen(1^\lambda)$.
\item The adversary $\cA$ is given $\vk$ and oracle access to $\Sign(\sk, \cdot)$. It adaptively queries the signing oracle, obtaining a sequence of message-signature pairs $A$. It then outputs a forgery $(m^*, \sigma^*)$.
\item The value of the game is 1 iff $\Ver(\vk, m^*, \sigma^*) = 1$ and $(m^*, \sigma^*) \notin A$.
\end{itemize}
\end{definition}

It is known that super-secure digital signature schemes can be constructed from one-way functions \cite{NaorYu89, Rompel90, KatzKo05, Goldreich04}.


We now describe our concept class $\sig$. Let $(\Gen, \Sign, \Ver)$ be a super-secure digital signature scheme. We define a concept class $\sig$ as follows. Fix the message length $\ell$. For every $(\vk, m, \sigma)$ with $m \in \{0, 1\}^\ell$ and $\Ver(\vk, m, \sigma) = 1$, define the concept
\[f_{\vk, m, \sigma}(\vk', m', \sigma') = \begin{cases}
1 &\text{ if } (\vk = \vk') \land (\Ver(\vk, m', \sigma') = 1) \\
0 & \text{ otherwise.}
\end{cases}\]
For convenience, we also include the all-zeroes hypothesis in $\sig$, with representation $\bot$.

\begin{theorem}
Let $\alpha, \beta > 0$. There exists a proper PAC learning algorithm $L$ for the concept class $\sig$ achieving error $\alpha$ and confidence $1-\beta$. Moreover, $L$ is efficient (running in time polynomial in the parameters $k, 1/\alpha, \log(1/\beta)$).
\end{theorem}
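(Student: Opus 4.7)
The plan is to exhibit a very simple proper (in fact, representation) learner and verify correctness by a two-case analysis. The learner $L$ requests $n=\lceil \log(1/\beta)/\alpha \rceil$ labeled examples $(\vk_i, m_i, \sigma_i, b_i)$. If any example is labeled $b_i=1$, it outputs the representation $(\vk_i, m_i, \sigma_i)$; otherwise it outputs $\bot$, representing the all-zeroes hypothesis.

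The key structural observation about $\sig$ is that the function computed by $f_{\vk,m,\sigma}$ depends only on $\vk$: for any two valid signed triples $(\vk,m,\sigma)$ and $(\vk, m', \sigma')$ sharing the same verification key, we have $f_{\vk,m,\sigma} \equiv f_{\vk,m',\sigma'}$ as functions. So if $b_i = 1$, then by definition of the labeling we have $\vk_i = \vk$ and $\Ver(\vk, m_i, \sigma_i) = 1$, hence $(\vk_i, m_i, \sigma_i)$ is a valid representation of a concept in $\sig$, and that concept is functionally identical to the target $f_{\vk,m,\sigma}$.

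I would then split into two cases based on the weight $p = \Pr_{x \sim \cD}[f_{\vk,m,\sigma}(x)=1]$. If $p < \alpha$, then either $L$ sees a positive example (in which case the output has error $0$ by the previous paragraph) or $L$ outputs $\bot$, which has error exactly $p < \alpha$; either way $L$ succeeds with probability $1$. If $p \ge \alpha$, then the probability that no example is positive is at most $(1-\alpha)^n \le e^{-\alpha n} \le \beta$, and conditioned on some $b_i=1$ the output has error $0$. In both cases the output is an $\alpha$-good hypothesis with probability $\ge 1-\beta$.

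Finally, the running time is $\poly(k, 1/\alpha, \log(1/\beta))$: drawing $n$ samples, scanning them for a positive label, and outputting a single triple are all polynomial-time operations. There is no real obstacle in this proof — the only subtlety is recognizing that concepts in $\sig$ are determined (as functions) entirely by $\vk$, so an arbitrary positive example yields a perfectly correct representation. This is exactly what will \emph{fail} to be privately achievable in the subsequent hardness argument, where super-security of the signature scheme prevents any efficient private algorithm from producing a valid $(\vk, m^*, \sigma^*)$ that was not essentially already present in the sample.
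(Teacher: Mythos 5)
Your proof is correct and takes essentially the same approach as the paper's: the identical learner (output the representation of any positive example, else $\bot$), the identical two-case split on the weight of the positive region, and the identical bound $(1-\alpha)^n \le \beta$. The one thing you spell out more explicitly than the paper is the structural fact that concepts in $\sig$ depend as functions only on $\vk$, which is why an arbitrary positive example yields an exact representation of the target concept; the paper states this consequence but leaves the reason implicit.
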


\begin{algorithm}[h]
\caption{Learner $L$ for $\sig$}
\label{fig:pac-learner}
\begin{enumerate}
\item Request examples $\{((\vk_1', m_1', \sigma_1'), b_1), \dots, ((\vk_n', m_n', \sigma_n'), b_n)\}$ for $n = \lceil\log(1/\beta)/\alpha\rceil$.
\item Identify an $i$ for which $b_i = 1$ and return the representation $(\vk_i', m_i', \sigma_i')$. If no such $i$ exists, return $\bot$ representing the all-zeroes hypothesis.
\end{enumerate}
\end{algorithm}

\begin{proof}

Fix a target concept $f_{\vk, m, \sigma} \in \sig_k$ and a distribution $\cD$ on examples. Let $\pos$ denote the set of examples $(\vk', m', \sigma')$ on which $f_{\vk, m, \sigma}(\vk', m', \sigma') = 1$. We divide the analysis of the learner into three cases based on the weight $\cD$ places on the sets $\pos$.

\paragraph{Case 1:} $\cD$ places at least $\alpha$ weight on $\pos$. Then $L$ receives a positive example with probability at least $1 - (1-\alpha)^n \ge 1 - \beta$, and is thus able to identify a concept that equals the target concept.

\paragraph{Case 2:} $\cD$ places less than $\alpha$ weight on $\pos$. If $L$ gets a positive example, then the analysis of Case 1 applies. Otherwise, the all-zeroes hypothesis is $\alpha$-good.

\end{proof}

We now prove the hardness of properly privately learning $\sig$ by constructing an example reidentification scheme for properly learning this concept class. Our example reidentification scheme yields a hard distribution even when the error parameter $\alpha$ is taken to be inverse-polynomially close to $1$.

\begin{theorem} \label{thm:reid-proper}
Let $\gamma(n)$ and $\xi(n)$ be noticeable functions. Let $(\Gen, \Sign, \Ver)$ be a super-secure digital signature scheme. Then there exists an (efficient) $(\alpha = 1 - \gamma, \xi)$-example reidentification scheme $(\GenTT, \TraceTT)$ for representation learning the concept class $\sig$.
\end{theorem}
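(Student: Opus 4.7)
The plan is to exploit the fact that a representation learner for $\sig$ is forced to output a triple $(\vk^*, m^*, \sigma^*)$ with $\Ver(\vk^*, m^*, \sigma^*) = 1$. By super-security of the signature scheme, an efficient learner cannot manufacture such a valid signature-message pair from scratch; it must reuse one of the signatures provided in its sample, which $\TraceTT$ can then identify.

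Concretely, $\GenTT(k, n)$ will sample $(\sk, \vk) \gets \Gen(1^\lambda)$ (making $\Sign$ deterministic via a PRF-derived randomness if necessary), fix the target concept to be $c = f_{\vk, 0^\ell, \Sign(\sk, 0^\ell)}$, and take $\cD$ to be the distribution that outputs $(\vk, m, \Sign(\sk, m))$ for uniformly random $m \in \zo^\ell$. Examples $x_i = (\vk, m_i, \sigma_i)$ are drawn i.i.d.\ from $\cD$, and the junk example is $x_0 = (\vk, 0^\ell, 0^{|\sigma|})$, which satisfies $\Ver(\vk, 0^\ell, 0^{|\sigma|}) = 0$ with overwhelming probability over $\sk$ (else one would trivially forge by outputting this pair). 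Given a hypothesis $h$, $\TraceTT$ returns $\bot$ if $h$ is the all-zeros hypothesis or parses as $f_{\vk^*, m^*, \sigma^*}$ with $\vk^* \ne \vk$, and otherwise returns the least $i$ for which $(m^*, \sigma^*) = (m_i, \sigma_i)$, or $\bot$ if no such $i$ exists.

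For \emph{completeness}, observe that every $x$ in the support of $\cD$ satisfies $c(x) = 1$, so the all-zeros hypothesis and any $f_{\vk^*, \cdot, \cdot}$ with $\vk^* \ne \vk$ have error $1 > \alpha = 1 - \gamma$ on $\cD$. An $\alpha$-good hypothesis must therefore be $f_{\vk, m^*, \sigma^*}$ for some valid $(m^*, \sigma^*)$; if $\TraceTT$ fails to trace, then $(m^*, \sigma^*) \notin \{(m_j, \sigma_j)\}_j$. A forger can then simulate $\GenTT$ using its own signing oracle (querying on $m_1, \dots, m_n$), run $L$ on the simulated sample $S$, and output $(m^*, \sigma^*)$ as a forged pair; super-security bounds this event by $\negl(k) \le \xi$. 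For \emph{soundness}, fix $i$ and suppose $\TraceTT(L(S_{-i})) = i$, i.e., $L$ outputs $(\vk, m_i, \sigma_i)$. A forger samples all $m_j$'s, queries for $\sigma_j$ only at $j \ne i$, constructs $S_{-i}$ (which does not involve $\sigma_i$), runs $L$, and outputs $(m_i, \sigma_i)$; since $m_i$ was never queried, this wins the super-security game, again bounding the probability by $\negl(k) \le \xi$.

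The main subtlety requiring care is the reliance on \emph{super-security} rather than standard EUF-CMA in the completeness direction. EUF-CMA would only rule out forgeries on \emph{new} messages and so would not prevent the learner from producing a fresh valid signature $\sigma^* \ne \sigma_j$ on one of the queried messages $m_j$---in which case $\TraceTT$ would return $\bot$ even though $h$ is good. Super-security is precisely the strengthening that forbids such ``signature malleability,'' closing the gap and making the tracing argument go through; it is also why the theorem is stated with respect to super-secure signatures, which are nonetheless known to follow from one-way functions.
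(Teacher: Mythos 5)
Your construction of $\GenTT$ and $\TraceTT$ matches the paper's, and your completeness argument (an untraceable good hypothesis yields a weak forgery, hence the need for super-security rather than mere EUF-CMA) is exactly the paper's argument. The one place you diverge is the soundness proof. The paper takes $x_0$ itself to be a fresh draw from $\cD$ and then gives a purely information-theoretic soundness argument: $S_{-i}$ is statistically independent of the uniformly random $\ell$-bit message $m_i$, so \emph{any} learner (efficient or not) outputs a representation containing $m_i$ with probability at most roughly $2^{-\ell}$, which is negligible. You instead use a fixed negative junk example $x_0$ and a second reduction to the forgery game, arguing that a learner who outputs a valid signature on the never-queried $m_i$ is a forger. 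This works, but it is unnecessarily computational and invokes the signature assumption a second time; it also has a small expositional slip --- the forger you describe cannot ``output $(m_i,\sigma_i)$'' directly since it never learns $\sigma_i$, but rather should forward whatever pair $L$ emits (which, in the event being bounded, equals $(m_i,\sigma_i)$). The paper's information-theoretic soundness argument is both simpler and strictly stronger, holding unconditionally rather than only against efficient learners.
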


We now give the proof of Theorem \ref{thm:reid-proper}.

\begin{proof}

We construct an example reidentification scheme for $\sig$ as follows. The algorithm $\GenTT$ samples $(\sk, \vk) \getsr \Gen(1^\lambda)$, a message $m \in \{0, 1\}^\ell$, and a signature $\sigma \getsr \Sign(\sk, m)$, yielding a concept $f_{\vk, m, \sigma}$. Let $\cD$ be the distribution of $(\vk, m, \Sign(\sk, m))$ for random $m \getsr \{0, 1\}^\ell$. $\GenTT$ then samples $x_0, x_1, \dots, x_n$ i.i.d. from $\cD$. Given a representation $(\vk^*, m^*, \sigma^*)$, the algorithm $\TraceTT$ simply identifies an index $i$ for which $x_i = (\vk^*, m^*, \sigma^*)$, and outputs $\bot$ if none is found.

We first verify completeness for this scheme. Let $L$ be a learner for $\sig$ using $n$ examples. If the representation $(\vk^*, m^*, \sigma^*)$ produced by $L$ represents an $(1 - \gamma)$-good hypothesis, then it must be the case that $\vk^* = \vk$ and $\Ver(\vk, m^*, \sigma^*) = 1$. Thus, if $L$ violates the completeness condition, it can be used to construct the weak forgery adversary $\cA$ (Figure \ref{fig:sig-adv}) that succeeds with noticeable probability $\xi$.

\begin{algorithm}[H]
\caption{Weak forgery adversary $\cA$}
\label{fig:sig-adv}
\begin{enumerate}
\item Query the signing oracle on random messages $m'_1, \dots, m'_n \getsr \{0, 1\}^\ell$, obtaining signatures $\sigma'_1, \dots, \sigma'_n$.
\item Run $L$ on the labeled examples $((\vk, m'_1, \sigma'_1), 1), \dots, ((\vk, m'_n, \sigma'_n), 1)$, obtaining a representation $(m^*, \sigma^*)$.
\item Output the forgery $(m^*, \sigma^*)$.
\end{enumerate}
\end{algorithm}

Now we verify soundness for the scheme. Observe that for any $i$, the sample $S_{-i}$ contains no information about message $m_i$. Therefore, the learner has a $2^{-\ell} = \negl(k)$ probability at producing a representation containing message $m_i$, proving soundness.

\end{proof}

}

\paragraph{Acknowledgements.} We gratefully acknowledge Kobbi Nissim and Salil Vadhan for helpful discussions about this work, and also thank Salil Vadhan for suggestions on its presentation.

\bibliographystyle{alpha}
\bibliography{references}

\end{document}